\documentclass[prodmode]{acmsmall}




\usepackage{times}
\usepackage{latexsym,epsf,epsfig}
\usepackage{amssymb,amsmath}

\DeclareMathOperator*{\argmin}{argmin}
\DeclareMathOperator*{\argmax}{argmax}
\newcommand*{\argminl}{\argmin\limits}
\newcommand*{\argmaxl}{\argmax\limits}

\makeatletter
\newif\if@restonecol
\makeatother

\usepackage{balance}
\usepackage[ruled,linesnumbered,vlined]{algorithm2e}
\usepackage{enumitem}

\begin{document}

\markboth{A. Nazi et al.}{Web Item Reviewing Made Easy By Leveraging Available User Feedback}

\title{Web Item Reviewing Made Easy By Leveraging Available User Feedback}
%


\author{AZADE NAZI
\affil{Department of Computer Science and Engineering, University of Texas at Arlington, USA}
MAHASHWETA DAS
\affil{Hewlett Packard Labs, Palo Alto, USA}
GAUTAM DAS
\affil{Department of Computer Science and Engineering, University of Texas at Arlington, USA}}

\begin{abstract} 
The increasing popularity and widespread use of online review sites over the past decade has motivated businesses of all types to possess an expansive arsenal of user feedback (preferably positive) in order to mark their reputation and presence in the Web. Though a significant proportion of purchasing decisions today are driven by average numeric scores (e.g., movie rating in IMDB), detailed reviews are critical for activities such as buying an expensive digital SLR camera, reserving a vacation package, etc. Since writing a detailed review for a product (or, a service) is usually time-consuming and may not offer any incentive, the number of useful reviews available in the Web is far from many. The corpus of reviews available at our disposal for making informed decisions also suffers from spam and misleading content, typographical and grammatical errors, etc. In this paper, we address the problem of how to engage the lurkers (i.e., people who read reviews but never take time and effort to write one) to  participate and write online reviews by systematically simplifying the reviewing task. Given a user and an item that she wants to review, the task is to identify the top-$k$ meaningful phrases (i.e., tags) from the set of all tags (i.e., available user feedback for items) that, when advised, would help her review an item easily. We refer to it as the TagAdvisor problem, and formulate it as a general-constrained optimization goal. Our framework is centered around three measures - relevance (i.e., how well the result set of tags describes an item to a user), coverage (i.e., how well the result set of tags covers the different aspects of an item), and polarity (i.e., how well sentiment is attached to the result set of tags) in order to help a user review an item satisfactorily. By adopting different definitions of coverage, we identify two concrete problem instances that enable a wide range of real-world scenarios. We show that these problems are NP-hard and develop practical algorithms with theoretical bounds to solve them efficiently. We conduct detailed experiments on synthetic and real data crawled from the web to validate the utility of our problem and effectiveness of our solutions.
\end{abstract}

\category{H.4}{Information Systems Applications}{Miscellaneous}

\category{D.2.8}{Social Media}{}[Algorithm, Data Mining]

\keywords{Tag Advisor; Personalization; Relevance; Coverage; Polarity}

\begin{bottomstuff}
The work of Azade Nazi and Gautam Das was partially supported by National Science Foundation under grants 0915834, 1018865, Army Research Office under grant W911NF-15-1-0020 and a grant from Microsoft Research. Any findings, conclusions, or recommendations expressed in this material are those of the authors and do not necessarily reflect the views of the sponsors listed above.

The work of Mahashweta Das is done while she was a graduate student at the University of Texas at Arlington.

Contact A. Nazi at {\em azade.nazi@mavs.uta.edu}

Contact M. Das at {\em mahashweta.das@hpe.com}

Contact G. Das at {\em gdas@cse.uta.edu}

\end{bottomstuff}

\maketitle

\section{Introduction}\label{sec:intro}

The increasing popularity and widespread use of online reviews in sites like Yelp, Amazon, Angie's List, TripAdvisor, etc. over the past decade has motivated businesses of all types to possess an expansive arsenal of user feedback (preferably positive) in order to mark their reputation and presence in the Web. User feedback is available in various forms such as numeric or star ratings, number of visits, number of check-ins, number of Facebook likes, tags, reviews, etc. Though a significant proportion of purchasing decisions today are driven by aggregate user feedback in the form of average rating (e.g., a movie in IMDB), number of Facebook check-in (e.g., a restaurant page in Facebook), number of views (e.g., an article in Business Insider), etc., detailed reviews continue to influence a wide variety of critical activities such as buying an expensive digital SLR camera, choosing a car, reserving a vacation package, etc. However, since writing a detailed review for a product (or, a service) is usually time-consuming and may not offer any incentive, the number of useful reviews available is far from many. Though the 1\% rule (or, the 90-9-1 rule) of Internet is presumed to be dead, the proportion of lurkers (i.e., people who read user-generated content in the Web without contributing) is still high. According to survey conducted by BrightLocal in 2015, though 92\% of consumers read online reviews, word of mouth is still the most popular way to recommend a product or a service.
Moreover, several sites like Yelp and IMDB allows users to submit feedback as ratings without any review accompaniment. As a result, the number of numerical ratings available for a product far exceeds the number of detailed reviews. The corpus of reviews available at our disposal for making informed decisions suffers from redundancy, inaccurate and misleading content, typographical and grammatical errors, etc. too.     

In this paper, we investigate how to engage the users to participate and write online reviews by systematically simplifying the web item (e.g., electronic products, apparel, restaurants, movies, music, travel itineraries, etc.) reviewing task. Given user feedback for items by past users in the form of text, a user and an item that she wants to review, our objective is to identify a set of {\em meaningful} phrases (i.e., {\em tags}) that we {\em advise} to the user in order to help her review the item. We refer to this as the {\bf TagAdvisor} problem. The user would quickly choose from among the set of returned tags to articulate her feedback for the item without having to spend a lot of time {\em writing} the review.

The top-$k$ tags should not only meet the necessary requisites of a {\em good} online review like conciseness, comprehensiveness, objectiveness, etc. but should also offer adequate incentive in the form of simple usability, easy applicability, etc. 
As one of our first step towards solution, we employ state-of-art text mining techniques (discussed later in Section~\ref{sec:expt}) and extract meaningful phrases or tags from user feedback in the form of text, i.e., reviews.
Since each tag is a user feedback for an item, the tags are extracted with sentiment labels attached to it. $T^+$ and $T^-$ are the set of positive and negative tags respectively. For example, a review statement ``{\em It is a lightweight camera with some amazing features}'' is reduced to the tags \{{\tt lightweight camera}, {\tt amazing features}\}, where both tags have positive sentiment. 

We formulate the problem of identifying the top-$k$ meaningful tags from the set of all tags (i.e., available user feedback for items) for a user-item pair, as a novel general-constrained optimization problem. A core challenge in this design is defining the essential properties of the top-$k$ tags to be returned that would serve to review the item effectively. We consider {\em relevance} (i.e., how well the result set of tags describes an item to a user), {\em coverage} (i.e., how well the result set of tags covers the diverse aspects of an item), and {\em polarity} (i.e., how well sentiment is attached to the result set of tags) in order to enable a user to satisfactorily review an item. Though relevance and coverage have been studied in the past~\cite{Belem:2013}, our work is the first to consider all three measures simultaneously in the context of tag mining.

A user can review an item in different ways. A user can express her broad opinion about the different aspects of an item which, in turn, can either be positive or negative. Again, a user can express both positive and negative opinion for the same attribute (or, set of attributes) of the item. For example, a user may write a review for a camera as  {\em ``The picture quality of this camera is great and so is the sharpness and color accuracy of the pictures, but the battery life is short.''}, while another user of the same camera may write {\em ``Though the extra screen with touchscreen and gesture-control features saps battery life, it's perfect for fashion-conscious snap shooters.''}. The first review contains positive feedback for the camera's image quality and negative feedback for the camera's battery life. The second review contains both positive and negative feedback for the camera's advanced features \{dual-screen, touchscreen and gesture-control\}. Therefore, the item attributes that were covered by the review is independent of the feedback sentiment in the former case, and dependent on the sentiment in the latter. This motivates us to propose two problem instances, namely \textbf{Independent Coverage TagAdvisor} problem and \textbf{Dependent Coverage TagAdvisor} problem that considers two different definitions of coverage respectively in order to satisfy users' real world needs. 
 


Though the output of our problem is recommending a set of tags for a user-item pair, our objective is different from the literature of work dedicated to {\em tag recommendation}~\cite{DBLP:conf/kdd/FengW12,DBLP:conf/sigir/SongZLZLLG08}. The top-$k$ tags in our problem are more feedback than descriptive relevant information for an item and hence calls for additional properties like coverage of all aspects of the item in order to ensure diversity, as well as sentiment polarity in opinion of the user for the different aspects of the item.
The latter deals with the automated process of suggesting useful and informative tags to an emerging resource based on historical information in order to help search, exploration, and navigation. In our problem, the tags are more feedback than information about the resource and hence calls for additional properties like coverage of all aspects of the item in order to ensure diversity as well as sentiment polarity in opinion of the user for the different aspects of the item. 
While {\em review summarization}, that helps users read the valuable content in the vast volumes of user feedback for items, has been researched in the literature~\cite{DBLP:conf/ACMicec/GhoseI07,DBLP:conf/kdd/HuL04,DBLP:conf/kdd/LappasCT12,DBLP:conf/kdd/TsaparasNT11}, our objective of simplifying a user's review writing task has not been studied to the best of our knowledge. Moreover, none of the existing work on review summarization, ranking, and selection accommodate relevance, coverage, and polarity that we consider in our framework. Even {\em collaborative filtering} based approaches for tag recommendation consider only relevance measure to determine the top-$k$ tags~\cite{DBLP:conf/pkdd/JaschkeMHSS07}. 

The TagAdvisor (TA) problem is technically challenging for several reasons. Our objective is to identify k tags that are relevant, cover different aspects of an item, and have well-balanced positive and negative sentiment attached to it. While the first two concerns the relationship between the item attributes and tags, the third is dependent on a user's personal preference. Some users tend to be lenient and provide mostly positive feedback; some tend to be critical. In this paper, we choose to focus on modeling the complex dependencies that exist between item attributes and tags and leverage user personal preference as a parameter, thereby letting the system deal with both new users and with new items, alleviating cold-start problems. Classifiers and rule learning techniques in the literature can be used to predict the relevance of tags for an item. In this paper, we employ existing techniques to predict the rules modeling the relationship between attributes and tags, where each rule has a probability of occurrence. 

As discussed earlier, formalizing the users' different ways of reviewing an item relates to the coverage characteristic of the top-$k$ tags to be returned. By adopting different definitions of coverage, we propose problems that enable a wide range of real-world scenarios. For a user reviewing an item, the Independent Coverage TagAdvisor (IC-TA) problem identifies top-k tags that are relevant, satisfy the user's criticalness in reviewing, and maximizes the number of item attributes covered by them, independent of their sentiment. On the other hand, the Dependent Coverage TagAdvisor (DC-TA) problem returns tags that cover item attributes both positively and negatively, in addition to being relevant and satisfying user's criticalness in reviewing. As one of our first results, we show that each of these problem is NP-Complete by reduction from  Max-Coverage problem with Group Budget Constraints problem and MAX-SUM Facility Dispersion problem respectively. Given this intractability result, designing efficient algorithmic solutions that work well in practice is challenging. In addition, the objective function of the second problem is proved to be not sub-modular thereby precluding the direct use of off-the-shelf greedy algorithms. For each problem, we develop two algorithmic solutions yielding optimal solutions, namely: (a) brute-force naive methods (\textbf{E-IC-TA} and \textbf{E-DC-TA}) and (b) techniques based on Integer Linear Programming (ILP) methods (\textbf{ILP-IC-TA} and \textbf{ILP-DC-TA}), which work well for moderate-sized problem instances. We also developed efficient algorithms that yield approximate solutions (\textbf{A-IC-TA} and \textbf{A-DC-TA}). We prove that each of our approximation algorithm produces solution with constant approximation factor. We conduct experiments on synthetic data and real data crawled from Yahoo! Autos, Walmart and Google Product to evaluate the efficiency and quality of our proposed algorithms. We present an Amazon Mechanical Turk user study and an interesting case study on real camera data to validate the effectiveness of our solution over that by state-of-art.

\vspace{0.02in}
\noindent
In summary, we make the following main contributions:
\vspace{-0.01in}
\begin{itemize}
\item We introduce and motivate the novel \textbf{TagAdvisor} problem that leverages available user feedback for items in online review sites to simplify the review writing task. Our objective is to identify the top-$k$ meaningful tags that, when advised to a user, would help her review an item easily.
\item We formulate the problem as a general-constrained optimization goal. Our formulation is centered around three measures \textemdash relevance, coverage, and polarity.
\item We formalize the users' different ways of reviewing an item by proposing two coverage functions and thereby defining two concrete problem instances, namely Independent Coverage TagAdvisor (\textbf{IC-TA}) and Dependent Coverage TagAdvisor (\textbf{DC-TA}) problems, that enable a wide range of real-world scenarios.
\item We show that each of the problems is NP-Complete and develop optimal Integer Linear Programming (ILP) based algorithms and practical algorithms with compelling theoretical properties to solve them efficiently.
\item We perform detailed experiments on synthetic and real data crawled from the web to demonstrate the utility of our problem and effectiveness of our algorithms.
\end{itemize}

\begin{table}
\centering
\tbl{An example camera review data as triple $<U,I,T>$\label{tbl:samsungFeatures}}{
\centering
\begin{tabular}{|p{0.6cm}|p{0.35cm}|p{0.85cm}|p{1.0cm}||p{1.0cm}|p{1.1cm}|p{0.7cm}|p{0.5cm}|p{0.5cm}|p{0.45cm}|p{0.7cm}|p{0.75cm}|p{1.0cm}||p{2.4cm}|} 
\hline
\multicolumn{4}{ |c|| }{\textbf{Users (U)}} & \multicolumn{9}{ c|| }{\textbf{Items (I)}} & \multicolumn{1}{ c| }{\textbf{Tags (T)}}\\ 
\hline
 {\sf \footnotesize User Name} & {\sf \footnotesize Age} & {\sf \footnotesize Gender} & {\sf \footnotesize  Location} & {\sf \footnotesize Item Name } & {\sf \footnotesize Resolution} & {\sf \footnotesize Optical Zoom } & {\sf \footnotesize Color} & {\sf \footnotesize Front LCD} & {\sf \footnotesize Back LCD} & {\sf \footnotesize Shutter Speed} & {\sf \footnotesize Touch screen} & {\sf \footnotesize Gesture Control} & {\sf \footnotesize Tags}\\
$\sf (u)$ & $\sf (c_1)$ & $\sf (c_2)$ & $\sf (c_3)$ & $\sf (i)$ & $\sf (a_1)$ & $\sf (a_2)$ & $\sf (a_3) $ & $\sf (a_4)$ & $\sf (a_5)$ & $\sf (a_6)$ & $\sf (a_7)$ & $\sf (a_8)$ & $\sf (T)$\\ 
\hline
\hline 
 {\sf \footnotesize Amy} & {\sf \footnotesize 23} & {\sf \footnotesize Female} & {\sf \footnotesize California} & {\sf \footnotesize Samsung TL225} & {\sf \footnotesize 12.2mp} & {\sf \footnotesize 4.6x} & {\sf \footnotesize Red} & {\sf \footnotesize 1.5"} & {\sf \footnotesize 3.5"} & {\sf \footnotesize 8-1/2000} & {\sf \footnotesize true} & {\sf \footnotesize true} & {\tt \small super cool}, {\tt \small stylish}, {\tt \small poor battery life}, {\tt \small lightweight}\\
\hline
 {\sf \footnotesize David} & {\sf \footnotesize 35} & {\sf \footnotesize Male} & {\sf \footnotesize Ohio} &{\sf \footnotesize Samsung TL225} & {\sf \footnotesize 12.2mp} & {\sf \footnotesize 4.6x} & {\sf \footnotesize Red} & {\sf \footnotesize 1.5"} & {\sf  \footnotesize 3.5"} & {\sf \footnotesize 8-1/2000} & {\sf \footnotesize true} & {\sf \footnotesize true} & {\tt \small poor battery life}, {\tt \small blurry pictures}, {\tt \small gimmicky touchscreen}\\
\hline 
\end{tabular}}
\end{table}

\begin{table}
\tbl{Set of rules for example data in Table~\ref{tbl:samsungFeatures}\label{tbl:samsungRules}}{
\centering
\begin{tabular}{|p{2.8cm}|p{7.4cm}|p{0.2cm}|p{3.25cm}|p{1.25cm}|p{0.4cm}|} 
\hline
$\{{\sf a}\}$ & {\sf \small Attributes} & ${t_x}$ & ${\sf \small Tags}$ & {\sf \small Sentiment} & $p$\tabularnewline
\hline
\hline 
{\sf \small \{$a.v_4, a.v_7, a.v_8$\}} & {\sf \footnotesize Front LCD=1.5"}, {\sf \small Touchscreen=true}, {\sf \footnotesize Gesture Control=true} & $t_1$ & \tt \small {super cool} & + & 0.3\tabularnewline
\hline
{\sf \small \{$a.v_3, a.v_4 , a.v_7, a.v_8$\}} & {\sf \footnotesize Color=Red}, {\sf \footnotesize Front LCD=1.5"}, {\sf \footnotesize Touchscreen=true}, {\sf \footnotesize Gesture Control=true}  & $t_2$ & \tt \small {stylish} & + & $0.2$ \tabularnewline
\hline
{\sf \small \{$a.v_1, a.v_2, a.v_5$\}} & {\sf \footnotesize Resolution=12.2mp}, {\sf \footnotesize Optical Zoom=4.6x}, {\sf \footnotesize Back LCD=3.5"} & $t_3$ & \tt \footnotesize {lightweight} & + & $0.1$\\
\hline
{\sf \small \{$a.v_4, a.v_7, a.v_8$\}} & {\sf \footnotesize Front LCD=1.5"}, {\sf \footnotesize Touchscreen=true},{\sf \footnotesize Gesture Control=true} & $t_4$ & \tt \small {poor battery life} & - & $0.13$\tabularnewline
\hline
{\sf \small \{$a.v_1, a.v_2, a.v_6$\}} & {\sf \footnotesize Resolution=12.2mp}, {\sf \footnotesize Optical Zoom=4.6x}, {\sf \footnotesize Shutter Speed=8-1/2000} & $t_5$& \tt \small {blurry pictures} & - & $0.12$\tabularnewline
\hline
{\sf \small \{$a.v_5, a.v_7, a.v_8$\}} & {\sf \footnotesize Back LCD=3.5"}, {\sf \footnotesize Touchscreen=true}, {\sf \footnotesize Gesture Control=true} & $t_6$ & \tt \small {gimmicky touchscreen}  & - & $0.15$\tabularnewline
\hline 
\end{tabular}}
\end{table}

\section{The T\lowercase{ag}A\lowercase{dvisor} Framework}
\label{sec:model}

\subsection{Preliminaries}
\label{subsec:preliminaries}
We model the data $D$ in an online review site as a triple $<U, I, T>$, representing the sets of users, items, and the tag vocabulary respectively. Let $n$ be the total number of tags in $T$. Each tagging action can be considered as a triple itself, represented as  $<u, i, {\tt T}>$ where $u \in U$, $i \in I$, and ${\tt T} \in T$. We assume that each user $u \in U$ has a well-defined schema $U_A = \{c_1, c_2, ...\}$, where the attributes typically are the demographic information such as  {\small \textsf{name}, \textsf{age}, \textsf{gender}, \textsf{location}}, etc. A user $u$ is represented as a tuple $\{c.v_1, c.v_2, ...\}$ conforming to $U_A$, where $c.v_y$ is the value of the user attribute $c_y$; e.g., $<${\small \textsf{name=Amy}, \textsf{age=23}, \textsf{gender=Female}, \textsf{location=California}}$>$ represents a 23 years old female from California. Similarly, every item $i\in I$ is associated with a well-defined schema $I_A = \{a_1, a_2, ..., a_m\}$ and each item $i$ is a tuple $\{a.v_1, a.v_2, ..., a.v_m\}$ with $I_A$ as schema, where $a.v_y$ is the value of item attribute $a_y$; e.g., $<${\small \textsf{brand=Samsung}, \textsf{model=TL225}, \textsf{type=point and shoot}}$>$ describes a compact Samsung camera. Note that, our work is not influenced by or biased towards any brand. 
Since each tag is a user feedback for an item, it describes the item positively or negatively. Therefore, we partition $T$ into $T^+$ and $T^-$, where $\left\vert T^+\right\vert$ is $n^+$ and $\left\vert T^-\right\vert$ is $n^-$. 

\vspace{0.05in}
\noindent \textsc{Example}: \textit{Suppose, we would like to help a user review a camera, say Samsung TL225. Table~\ref{tbl:samsungFeatures} describes the data available in an online review site where users Amy and David have left tag-based feedback for the camera. Table~\ref{tbl:samsungFeatures} also shows the attribute values for the users and the camera. The set of all tags $T$ = \{ ${\tt T}_1$, ${\tt T}_2$\} for  item $i$ (i.e., Samsung TL225) by users $u_1$ (i.e., Amy) and $u_2$ (i.e., David) is classified into} ${T}^+$ = \{{\tt super~cool, stylish, lightweight}\} \textit{and} ${T}^-$ = \{{\tt blurry pictures, gimmicky touchscreen, poor battery life}\} \textit{by domain experts.}

Given an item $i$ and set of tags $T$, probabilistic classifiers can be used to compute the relevance of the tags for the item (i.e., $Pr (t_x | i)$). In this paper, we use the rule based classifiers~\cite{Cohen:1995,Liu:1998} to find the dependency of the item attributes to the tags and generate rules with probability of occurrence $p$, i.e., the relevance score. However, there exist a number of prior work that show popular classifiers like decision tree, random forest and SVM can also be used to generate rules~\cite{Quinlan:1987,Chiang:2001,Sirikulviriya:2011,Nunez:2002,Costa:2005,Barakat:2010,Diederich:20088}. We discuss the detail of the related work in Section~\ref{sec:relwork}.


\vspace{0.05in}
\noindent \textsc{Example} [continued]: \textit{Table~\ref{tbl:samsungRules} presents a set of rules associated with the Samsung TL225 camera and tags in Table~\ref{tbl:samsungFeatures}. Illustrating one of the rules:} \{{\small \sf Front LCD=1.5'', Touchscreen=true, Gesture Control=true}\} $\rightarrow$ {\tt short battery life} \textit{with $p = 0.13$ indicates that with probability of 0.13 the camera's dual LCD feature along with its touchscreen and gesture control interfaces are responsible for the camera receiving the tag {\tt short battery life}.}

For an item $i$ having attributes values \{$a.v_1, a.v_2, ... a.v_m$\}, if there are several rules for a tag $t_x$, the one with highest probability $p$ would be selected. For the rest of the paper, we use the example in Tables~\ref{tbl:samsungFeatures} and \ref{tbl:samsungRules} as the running example.

In this paper, our objective is to identify the top-$k$ tags $T^* = \{t_1, t_2, ...,t_k\}$ for a user $u \in U$ and an item $i \in I$ such that $u$ can review $i$ by choosing from $T^*$. The result set $T^*$ is selected from the tag vocabulary $T$ if they are ``meaningful''. Before formalizing the problem, let us define the essential characteristics that tags in $T^*$ must satisfy:   

\vspace{0.05in}
\noindent\textbf{Relevance}: \textit{Given item $i$ and tag vocabulary $T$, the relevance of a tag $t_x \in T^*$ denotes how well $t_x$ describes $i$. Mathematically, it is measured as the probability of obtaining $t_x$ given $i$, i.e., {\sc rel}($t_x, i$) = $Pr (t_x | i)$. As we have discussed earlier this score can be computed by employing a classifier modeling the relationship between item attributes and tags. Thus, {\sc rel}$(T^*) = FUNC_{t_x \in T*}\Big(\textsc {rel}(t_x,i)\Big)$ = $\sum_{t_x \in T^*}\Big(\textsc {rel}(t_x,i)\Big)$.}

Given a list of tags $T$ which is sorted by the relevance (i.e., {\sc rel}($t_x, i$) = $Pr (t_x | i)$), the maximum relevance score is the total score for the top $k$ tags in the sorted list. We represent the maximum relevance score for a set of $k$ tags from $n$ tags in $T$ as $\textsc{rel}_{max}^{T, k}$.

\vspace{0.05in}
\noindent\textbf{Coverage}: \textit{Given item $i$, tag vocabulary $T$, and a set of associated rules $\Re = \{\{a.v\}\rightarrow t_x \}$, the coverage of a tag $t_x \in T^*$ for $i$ is the set of distinct item attribute values have been covered by it. We say $t_x$ covers the attribute value $a.v_y$ if $a.v_y \in \{a.v\}$, i.e., {\sc cov}($t_x, i$) = $\{a.v\}$. Therefore, {\sc cov}$(T^*) = FUNC_{t_x \in T*}\Big(\textsc {cov}(t_x,i)\Big)$. We will discuss FUNC in details later in Section~\ref{subsec:concrete}.
} 

\vspace{0.05in}
\noindent\textbf{Polarity}: \textit{Given item $i$, and tag vocabulary $T$, the polarity of $T^*$ for a user reviewing item $i$ captures the distribution of sentiment in opinion. It is measured as the ratio of the number of the positive tags to the number of the negative tags, i.e., {\sc pol($T^*$)} = $\frac{|T^*{^{+}}|}{|T^*{^{-}}|}$.}

While maximization of the first two characteristics, i.e., relevance and coverage, for determining the set $T^*$ of top-$k$ tags is obvious, the third characteristics, i.e., polarity is dependent on a user's personal preference. Some users tend to be lenient and provide mostly positive feedback; some tend to be harsh. Thus, there is not any obvious way of estimating a user's criticalness in reviewing. One reasonable solution is to aggregate sentiments of user demographic groups and consider the value of the group to which the user belongs as her reviewing tendency. For example, if the average rating for cameras by all young female users living in California is 8.0 (on a scale of 10.0), then a user belonging to the sub-population will have a criticalness factor of 0.8 (on a 0-1 scale); she is likely to assign 80\% positive feedback and 20\% negative feedback to a camera. {\sc pol($T^*$)} = $\frac{|T^*{^{+}}|}{|T^*{^{-}}|}$ should be at least $\frac{0.8}{0.2}$, i.e., 4. In other words, polarity is the ``odds" of the positive tags which is the probability of positive tags $\frac{|T^*{^{+}}|}{|T^*|}$ to the probability of negative tags $\frac{|T^*{^{-}}|}{|T^*|}$. Since our TagAdvisor problem focuses on modeling the relationship between item attributes and tags, we leverage user personal preference as a parameter in our framework. We refer to this parameter, denoted by $\alpha$ as \textbf{User Factor}, where the value of the $\alpha$ is normalized to a [0,1] continuous sentiment scale.  
  

\subsection{The Problem}
\label{sec:problemDefinition}
A user can review an item in different ways. A user can express her opinion on multiple item attributes which in turn, can either be positive or negative. For example, the set of tags \{{\tt great picture quality, great sharpness, great color accuracy, short battery life}\} 
contains positive feedback for the camera's image quality and negative feedback for the camera's battery life. Again, a user can express both positive and negative opinion for the same attribute (or, set of attributes). For example, the set of tags \{{\tt short battery life}, {\tt stylish}\} 
contains both positive and negative feedback for the camera's innovative/advanced aspects (i.e.,dual-screen, touchscreen and gesture-controlled). From Table~\ref{tbl:samsungRules}, {\tt short battery life} and {\tt stylish} are tags related to camera attributes {\small \sf Front LCD}, {\small \sf Touchscreen} and {\small \sf Gesture Control} for Samsung TL225. 

We first propose a general TagAdvisor problem and then present two different problem instances that enable a wide range of
real-world scenarios. The instances are distinct by the difference in formulation of the coverage of a set of tags $T^*$, i.e., {\sc cov}$(T^*)$. 

\begin{figure}
\centerline{
\includegraphics[height=50mm,width = 80mm]{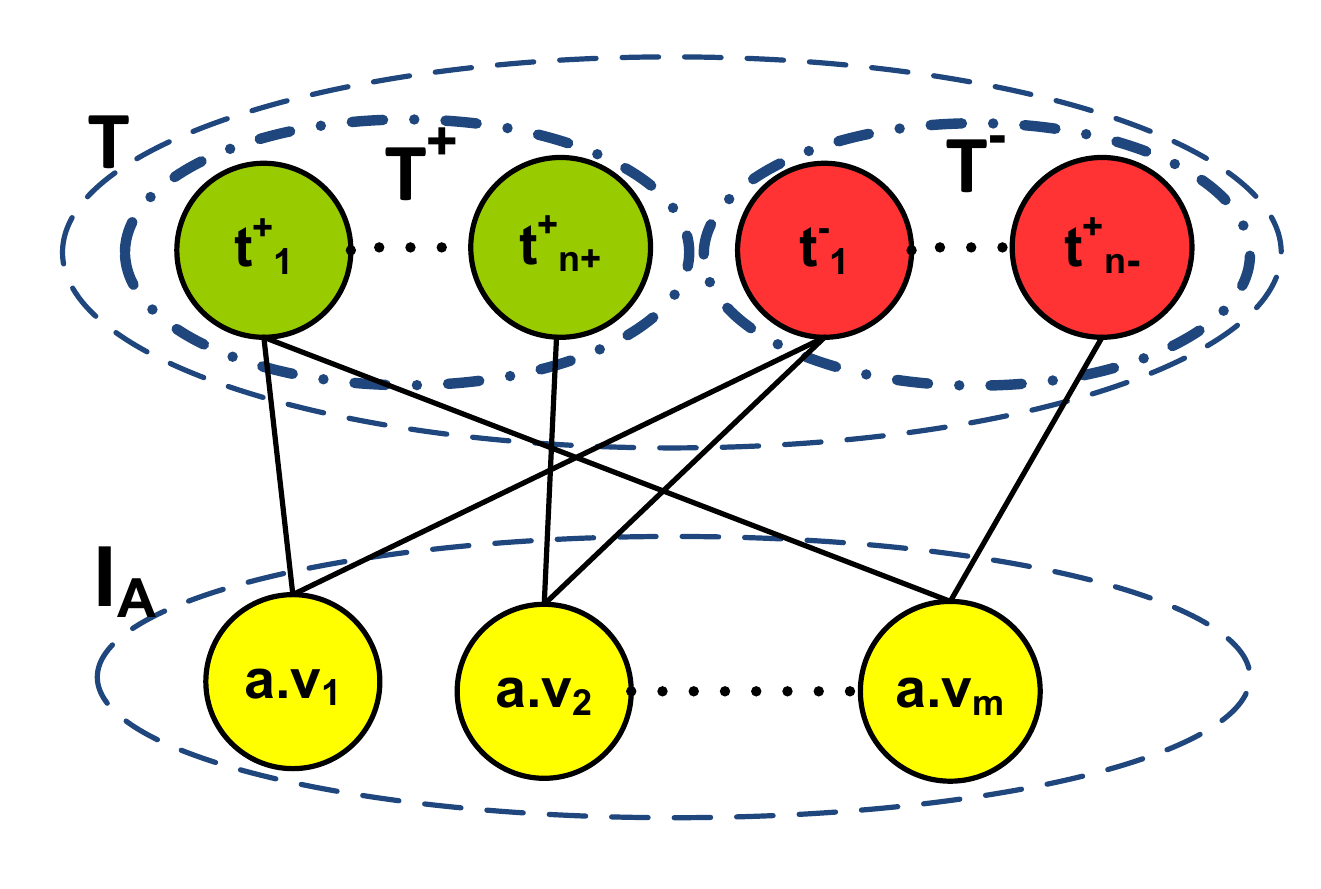}}
\caption{{TagAdvisor Bipartite Graph model.}}
\label{fig:TAModel}
\end{figure}


\begin{figure}
\centerline{
\includegraphics[height=50mm, width = 120mm]{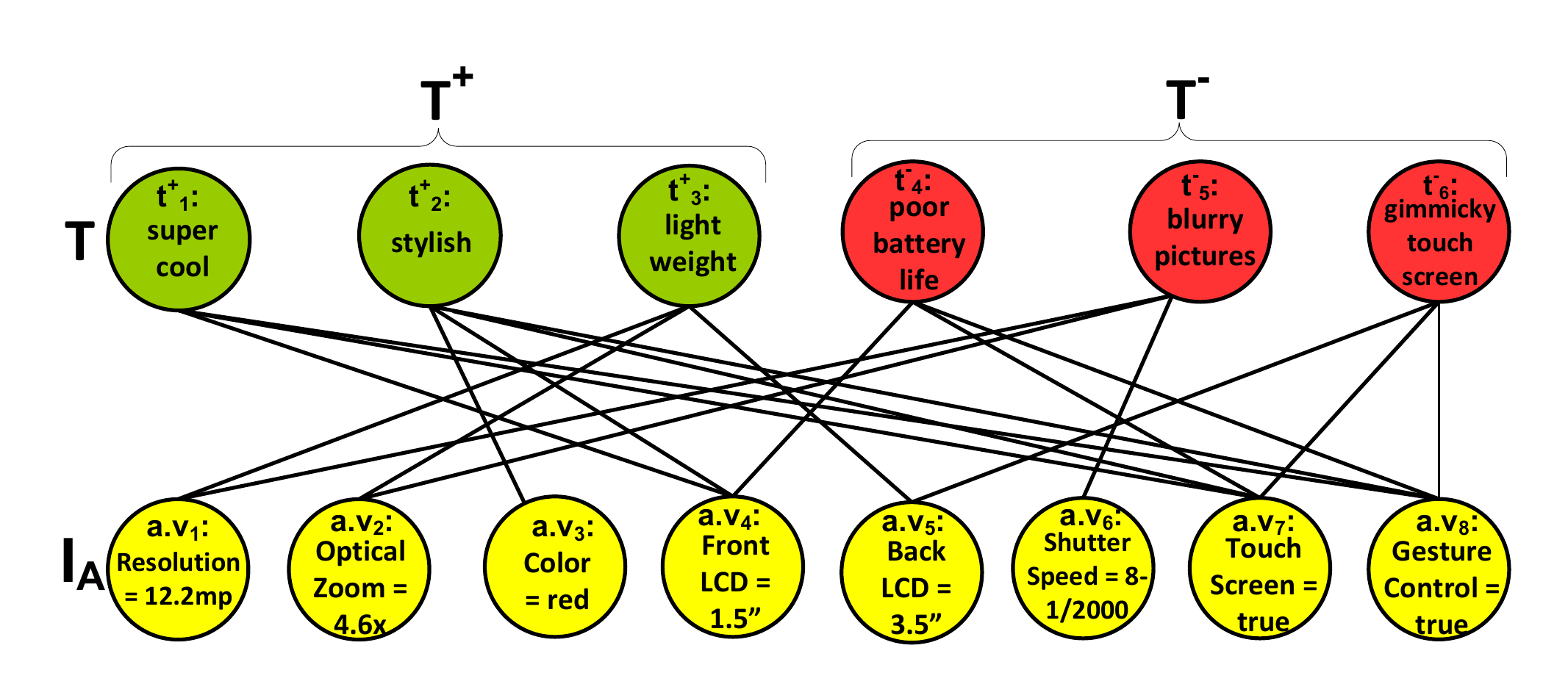}}
\caption{{TagAdvisor Bipartite Graph model of the Running Example.}}
\label{fig:TAModel-example}
\end{figure}

\vspace{0.05in}
\noindent\textsc{Definition 1.} 
\textbf{TagAdvisor Problem (TA)}: \textit{Given a set of rules $\Re = \{\{a.v\}\rightarrow t_x \}$ for an item $i= \{a.v_1, a.v_2, ...\}$ and $t_x \in T$, non-negative integer budget $k$, relevance parameter $\beta$ ($0 \leq \beta \leq 1$), and user factor $\alpha$ ($0 \leq \alpha \leq 1$), find a subset of $T^* \subseteq T$ such that:
\begin{list}{$\bullet$}{}
\item {$|T^*| \leq k$;}
\item {\sc pol$(T^*) = \frac{\alpha}{1-\alpha}$};
\item $\textsc{rel}(T^*) \geq \beta \times \textsc{rel}_{max}^{T,k}$;
\item {\sc cov}$(T^*)$ is maximized, 
\end{list}}

\noindent where {\sc pol($T^*$)} is the sentiment in opinion by tags in $T^*$, i.e., the number of positive tags ($k \alpha$)  to the number of negative tags($k - k \alpha$), {\sc rel}($T^*$) is the total relevance of tags in $T^*$, $\textsc{rel}_{max}^{T,k}$ is the maximum relevance for $k$ tags from $T$ with the same sentiment in opinion, and {\sc cov}($T^*$) is the total number of item attributes covered by tags in $T^*$. The relevance parameter $\beta$ ensures that the relevance score of tags in $T^*$ is as close to the best possible relevance score $\textsc{rel}_{max}^{T,k}$. The user factor $\alpha$ denotes the proportion of positive and negative tags preferred by a user.

\subsection{General Model}
\label{subsec:datamodel}
We model the TagAdvisor Problem as bipartite graph $G_{TA} = (V = V_T \cup V_I, E)$ as shown in Figure~\ref{fig:TAModel}, where $V_T$ is the set of nodes associated with the tag vocabulary $T$, $V_I$ is the set of nodes associated with the item attribute values, $V_T$ and $V_I$ are disjoint and $E \subseteq (V_T \times V_I)$. The nodes in partite $V_T$ are further classified into positive nodes $V_{T^+}$ (colored green) and negative nodes $V_{T^-}$ (colored red), based on the sentiment of the tags. If the same tag has positive sentiment for an attribute value and negative sentiment for another attribute value, we consider the tag as two different nodes in the set $V_T$. An edge $(t_x^+,a.v_y) \in E$ if $t^+_x$ covers attribute value $a.v_y$, i.e., the rule $\{ \{a.v\}\rightarrow t_x^+ \}$, $a.v_y \in \{a.v\}$ exists; similarly $(t^-_w,a.v_y) \in E$ if $t^-_w$ covers attribute value $a.v_j$. We use the graph model for the coverage purpose.

\vspace{0.05in}
\noindent \textsc{Example} [continued]: \textit{
Figure~\ref{fig:TAModel-example} shows the bipartite graph model of our running example in Table~\ref{tbl:samsungRules}, where $G_{TA}$, has two parts $V_T = V_{T^+} \cup V_{T^-}$ in green and red respectively and $V_I$ in yellow, where $T^+ = \{t^+_1,t^+_2,t^+_3\}$, $T^- = \{t^-_4,t^-_5,t^-_6\}$. The edges represents the rules in Table~\ref{tbl:samsungRules}. For example, nodes $t^+_1$ and $t^-_4$ has three edges to the same attribute value nodes $a.v_4$, $a.v_7$, and $a.v_8$.}

We next define two concrete problem instances of the TA Problem based on {\sc cov}$(T^*)$.

\subsection{Concrete Problem Instances}
\label{subsec:concrete}
In the first problem, {\sc cov}$(T^*)$ is defined as the total number of item attribute values covered by the tags in $T^*$, indepedent of their sentiment. In this problem, an attribute value $a.v_y$ for an attribute $a_y$ of an item $i$ is covered by $T^*$ if $\exists ~t_x \in T^*$ such that $a.v_y \in \textsc{cov}(t_x, i)$, i.e., there exists a tag $t_x$ covering $a.v_y$, independent of its sentiment. 

\vspace{0.05in}
\noindent
\textsc{Definition 2.} Given a set of tags $T^*$, \textsc{Independent-Coverage} of $T^*$ is defined as:
\begin{equation}
\label{eq:UC}
\textsc{cov}_{IC}(T^*) = |\bigcup_{t_x \in T^*} \textsc{cov}(t_x, i)|
\end{equation}  

\vspace{0.05in}
\noindent \textsc{Example} [continued]: \textit{
In the running example in Table~\ref{tbl:samsungRules} and by Figure~\ref{fig:TAModel-example}, if $T^*$ = $\{t_1^+, t_2^+, t_6^-\}$ = \{{\tt super cool, stylish, gimmicky touchscreen}\}, then {\sc cov}$_{IC}(T^*)$ = $|\{a_3, a_4, a_5, a_7, a_8\}|$ = $|\{$\small \sf Color=Red, Front LCD=1.5'', Back LCD= \- 3.5'', Touchscreen=true, Gesture Control=true$\}|$ = 5.} 

Based on $\textsc {cov}_{IC}(T^*)$ in Equation~\ref{eq:UC} the first problem can now be defined as follows.

\vspace{0.05in}
\smallskip\noindent \framebox[\columnwidth]{\parbox{0.9\columnwidth}{ \textsc{Problem 1.} \textbf{[Independent-Coverage TA Problem (IC-TA)]}:
This problem is an instance of TagAdvisor Problem (TA) in Definition 1. where the input and constrains are the same but the objective is:
\begin{itemize}
\item \textsc{cov}$_{IC}(T^*)$ (given by Equation~\ref{eq:UC}) is maximized
\end{itemize}
}}

\vspace{0.05in}
However, by considering the coverage of an item attribute value by a tag independent of the tag's sentiment, we may restrict a user from reviewing both positively and negatively about the different aspects of an item. By {\sc cov}$_{IC}(T^*)$, if $T^*$ includes a tag that is positive and covers a subset of item attribute values, another tag that is negative and covers the same subset would not be included in $T^*$. In the running example in Table~\ref{tbl:samsungRules}, if at least one of the positive tags, say $t_1^+:${\tt stylish} belongs to $T^*$ with a higher relevance score, then $a.v_7$: {\small \sf Touchscreen=true} and $a.v_8$: {\small \sf Gesture~Contro=true}  are considered covered because of rule $\wp: \{a.v_3, a.v_4, a.v_7, a.v_8\}\rightarrow t_1^+$; $T^*$ would not include either of the negative tags {\tt gimmicky touchscreen}  and {\tt poor battery life} related to $a.v_7$, and $a.v_8$. This motivates us to define the second problem instance where an item attribute value is considered {\em fully covered} if it is covered by both positive and negative tags. 

In second problem, coverage of $a.v_y$ depends on the sentiment of its associated tags. An attribute value $a.v_y$ for an attribute $a_y$ of an item $i$ is covered if one of the following holds:

\begin{itemize}[leftmargin=*]
\item $a.v_y$ is covered by both positive and negative tags, and atleast one of its positive and atleast one of its negative tags belong to $T^*$. Formally, $\exists t^+_x \in T^{*}, \exists t^-_w \in T^{-^*} \mbox{ such that } a.v_y \in \mbox{{\sc cov}}(t^+_x, i) \cap a.v_y \in \mbox{{\sc cov}}(t^-_w, i)$
\item $a.v_y$ is covered only by positive tags and not negative tags, and atleast one of its positive tags belongs to $T^*$. Formally, $\exists t^+_x \in T^{*}, \forall t^-_w \in T^{*}$ such that $a.v_y \in \mbox{{\sc cov}}(t^+_x, i) ~\cap~  a.v_y \notin$ \ $~\mbox{{\sc cov}}(t^-_w, i) $
\item $a.v_y$ is covered only by negative tags and not positive tags, and atleast one of its negative tags belongs to $T^*$. Formally, $\forall t^+_x \in T^{+^*}, \exists t^-_w \in T^{-^*} \mbox{ such that } a.v_y \notin \mbox{{\sc cov}}(t^+_x, i) \cap  a.v_y \in ~\mbox{{\sc cov}}(t^-_w, i) $ 
\end{itemize}

\noindent\textsc{ Definition 3.} Given a set of tags $T^*$, \textsc{Dependent-Coverage} of $T^*$ is defined as:
\vspace{-0.1in}
\begin{eqnarray}
\label{eq:DC}
\small
 \textsc {cov}_{DC}(T^*) & = & |(\bigcup_{t^+_x \in T^{*}}\textsc {cov}(t^+_x, i)) \bigcap (\bigcup_{t^-_w \in T^{*}} \textsc {cov}(t^-_w, i))|\nonumber\\
& + & |\bigcup_{t^+_x \in T^{*}} \textsc {cov}(t^+_x, i) \setminus \bigcup_{t^-_w \in T^-} \textsc {cov}(t^-_w, i) |\nonumber\\
& + & |\bigcup_{t^-_w \in T^{*}} \textsc {cov}(t^-_w, i) \setminus \bigcup_{t^+_x \in T^+} \textsc {cov}(t^+_x, i) |
\end{eqnarray}

Thus the coverage function in this problem variant considers both positive and negative tags for an attribute value if it exists; otherwise, it focuses on either the positive tag or the negative tag (which ever exists) and ends up returning the same $T^*$ as Problem 1. In our running example in Table~\ref{tbl:samsungRules}, we see that attribute $a.v_4: {\small \sf Front LCD=1.5''}$ is in three rules corresponding to tags \{{\tt super cool}, {\tt stylish}, and {\tt poor battery life}\}. By this definition of coverage, $a.v_4: {\small \sf Front LCD=1.5''}$ is covered by a tag in $T^*$ if atleast one of the positive tags \{{\tt super cool} or {\tt stylish}\} and the one negative tag {\tt poor battery life} exists in $T^*$. Again, $a.v_3: {\small \sf Color=Red}$ is covered if the positive tag {\tt stylish} belongs to $T^*$ since there is no negative tag related to $a.v_3$ in the rules in Table~\ref{tbl:samsungRules} and $a.v_6: {\small \sf Shutter Speed=8-1/2000}$ is covered if {\tt blurry pictures} is in $T^*$ since there is no positive tag related to $a.v_6$ in the rules in Table~\ref{tbl:samsungRules}. 

\vspace{0.05in}
\noindent \textsc{Example} [continued]: \textit{
In the running example in Table~\ref{tbl:samsungRules} and by Figure~\ref{fig:TAModel-example}, if} $T^*$=$\{t_1^+, t_2^+, t_6^-\}$=\{{\tt super cool, stylish, gimmicky touchscreen}\}, then {\sc cov}$_{DC}(T^*)$ = $|\{a.v_7,a.v_8\}|$ + $|\{a.v_3\}|$=$|\{${\small \sf Touchscreen=true, Gesture Control=true}$\}|$+$|\{${\small \sf Color=\ Red}$\}|$ = 3.

The second problem can now be defined as follows.

\vspace{0.05in}
\smallskip\noindent \framebox[\columnwidth]{\parbox{0.9\columnwidth}{ \textsc{Problem 2.} \textbf{[Dependent-Coverage TA Problem (DC-TA)]}:
This problem is an instance of TagAdvisor Problem (TA) in Definition 1. where the input and constrains are the same but the objective is:
\begin{itemize}
\item \textsc{cov}$_{DC}(T^*)$ (given by Equation~\ref{eq:DC}) is maximized
\end{itemize}
}}

\section{I\lowercase{ndependent}-C\lowercase{overage} T\lowercase{ag}A\lowercase{dvsior} \small{(IC-TA)}}
\label{sec:IC-TA}

In this section, we first analyze the computational complexity of the Independent-Coverage TagAdvsior (IC-TA) problem and show that it is NP-complete; then we discuss exact algorithms and an approximation algorithm for solving it.
\subsection{Computational Complexity}

\noindent The decision version of the IC-TA is defined as follows:

Given a set of rules $\Re = \{\{a.v\}\rightarrow t_x \}$ for an item $i$, non-negative integer budget $k$, relevance parameter $\beta$ ($0 \leq \beta \leq 1$), user factor $\alpha$ ($0 \leq \alpha \leq 1$), and integer threshold $\gamma \geq 0$, is there a set of $T^* \subseteq T$ such that $\textsc {cov}_{IC}(T^*)$ $\geq \gamma$ subject to: $|T^*| \leq k$, {\sc pol}$(T^*) =$ $\frac{\alpha}{1-\alpha}$, and $\textsc{rel}(T^*) \geq \beta \cdot \textsc{rel}_{max}^{T,k}$.

\begin{theorem}
\label{th:comp_Prob2}
The decision version of the Independent-Coverage TagAdvsior (IC-TA) problem is NP-Complete.
\end{theorem}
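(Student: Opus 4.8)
The plan is to establish both membership in NP and NP-hardness. Membership in NP is routine: given a candidate subset $T^* \subseteq T$ as certificate, one verifies in polynomial time that $|T^*| \le k$, that $\textsc{pol}(T^*) = \alpha/(1-\alpha)$ by counting positive and negative tags, that $\textsc{rel}(T^*) \ge \beta \cdot \textsc{rel}_{max}^{T,k}$ by sorting the relevance scores to obtain $\textsc{rel}_{max}^{T,k}$ and summing the relevances of $T^*$, and that $\textsc{cov}_{IC}(T^*) = |\bigcup_{t_x \in T^*}\textsc{cov}(t_x,i)| \ge \gamma$ by a single pass over the rules. All checks are polynomial in the input size, so $T^*$ is a valid polynomial certificate.

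For hardness I would reduce from the decision version of the Maximum Coverage problem with Group Budget Constraints (MCGBC), which is NP-complete even when restricted to two groups, since that restriction already contains classical maximum coverage. Recall MCGBC supplies a ground set $X$, a collection of subsets partitioned into groups, a budget bounding how many sets may be chosen per group, and asks whether some feasible selection covers at least $\gamma$ elements. The observation driving the reduction is that the polarity requirement $\textsc{pol}(T^*)=\alpha/(1-\alpha)$ is an \emph{equality}, so it pins down the number of positive and negative tags and therefore behaves precisely like a pair of group budgets imposed on the two sentiment classes $T^+$ and $T^-$. I would map the two groups onto $T^+$ and $T^-$: each set of the first group becomes a positive tag, each set of the second a negative tag, each ground element $x_j$ becomes an attribute value $a.v_j$, and a tag covers $a.v_j$ exactly when its set contains $x_j$ (installing the rule $\{\ldots,a.v_j,\ldots\}\rightarrow t_x$). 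Setting $k$ to the sum of the two group budgets $b^+$ and $b^-$ and choosing $\alpha=b^+/(b^++b^-)\in[0,1]$ makes $\alpha/(1-\alpha)=b^+/b^-$, and under this correspondence $\textsc{cov}_{IC}(T^*)$ equals the coverage of the chosen sets, so the objectives coincide.

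Two points need care, and the second is the main obstacle. First, the relevance constraint must not rule out valid selections: I would assign every tag an identical relevance (say $1$), so $\textsc{rel}_{max}^{T,k}=k$, and take $\beta=1$, which forces $\textsc{rel}(T^*)=|T^*|\ge k$ and hence (with $|T^*|\le k$) pins $|T^*|=k$. This conveniently turns the relevance constraint into a full-budget requirement rather than an obstruction. Second, and more delicate, the polarity constraint is an equality whereas MCGBC budgets are upper bounds, so I must reconcile ``exactly $b^+$ positive and $b^-$ negative'' with ``at most $b^+$ and at most $b^-$.'' Here I would introduce dummy tags of each sentiment that cover no attribute value, so that an underfull group can always be padded up to its exact budget; monotonicity of $\textsc{cov}_{IC}$ guarantees this padding never decreases coverage.

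Finally I would confirm the construction is polynomial and argue equivalence in both directions: a feasible MCGBC selection of coverage $\ge\gamma$ pads (via dummies) to a feasible $T^*$ with $|T^*|=k$, the prescribed polarity, and $\textsc{cov}_{IC}(T^*)\ge\gamma$; conversely any feasible $T^*$ contains at most $b^+$ genuine positive and $b^-$ genuine negative tags, which—after discarding the zero-coverage dummies—project back to a feasible MCGBC selection of the same coverage. Together with membership in NP this yields NP-completeness. I expect the equality-versus-upper-bound mismatch in the polarity constraint to demand the most care, as it is precisely where the dummy-tag gadget, the choice of uniform relevance and $\beta$, and the monotonicity argument must be combined correctly.
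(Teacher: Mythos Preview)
Your proposal is correct and follows essentially the same route as the paper: both reduce from Max-Coverage with Group Budget Constraints, mapping the two groups onto the positive and negative tag classes and aligning the group budgets with the polarity equality. The only technical differences are that the paper simply sets $\beta=0$ to vacuously satisfy the relevance constraint (rather than your uniform-relevance-with-$\beta=1$ device) and does not introduce dummy tags; your explicit handling of the equality-versus-upper-bound mismatch via padding is in fact more careful than the paper's somewhat informal treatment.
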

\begin{proof}
The membership of decision version of IC-TA in NP is obvious. To verify NP-Completeness, we reduce Max-Coverage problem with group budget constraints (MCG)~\cite{Chekuri:2004}, to our problem and argue that a solution to MCG exists, if and only if, a solution to our problem exists. In MCG problem, given $S = \{ S_1, S_2 , ...\}$ as a collection of sets where each set $S_i$ is a subset of a ground set $\mathcal{X}$ of $l$ elements and $S$ is partitioned into groups $G_1, G_2, ..., G_m$, the goal is to pick $k$ sets from $S$ such that at most $k_i$ sets be picked from each group $G_i$ and cardinality of their union is maximum. This problem was proved to be NP-Complete by reduction from Max-Coverage in~\cite{Chekuri:2004} if the number of groups is atleast one ($m \geq 1$). We construct an instance of IC-TA problem such that the solution for MCG with two groups exists, if and only if, the solution to our IC-TA instance exists.

For every $S_i \in S$, there exists a corresponding $t_x \in T$. We create a set of rules $\Re =\{ \{a.v\}\rightarrow t_x \}$ such that for every element in ground set $\mathcal{X}$, there exist a tag  such that $a.v_i \in t_x$.
Next, based on the sentiment of the tags, we partition $\Re$ into two groups, i.e., positive and negative groups where  $G_1$ corresponds to positive group and $G_2$ corresponds to negative group. We set the $\alpha=\frac{k_1}{k_2}$ , where $k_i$ is number of sets should be picked from each group $G_i$, and $\beta = 0$ i.e., the polarity constraint {\sc pol}$(T^*) \geq$ $\frac{\alpha}{1-\alpha}$ is satisfied and relevance constraint will be relaxed because $\textsc{rel}(T^*) \geq 0$ is always true.
In Equation~\ref{eq:UC}, $\textsc {cov}_{IC}(T^*)$ is the cardinality of the union of the coverage of the tags. Thus, in this IC-TA instance, if $T^*$ with $k = k_1 + k_2$ tags, where $k_1$ tags are selected from positive group and $k_2$ tags are selected from negative group maximizes the $\textsc {cov}_{IC}(T^*)$, then the corresponding sets in $S$ maximizes the cardinality of their union in MCG with two groups. Thus, IC-TA problem is NP-Complete.
\end{proof}

\subsection{Exact Algorithms}
\label{subsec:exact-IC}
A brute-force approach to solve the IC-TA problem enumerates all possible $^nC_k$ ($n$ is the total number of tags in vocabulary, $k$ is the size of $T^*$) combinations of tags in order to return the optimal set of tags maximizing coverage \textsc{cov}$_{IC}(T^*)$ and satisfying the constraints. The number of possible candidate sets is exponential in the number of the rules for an item. If there are $m$ boolean attributes for an item, there are potentially $2^m$ rules for tags. Thus, evaluating the constraints on each of the candidate sets and selecting the optimal result can be prohibitively expensive. 
We refer to this naive exact algorithm of IC-TA as {\bf E-IC-TA}. 

We next show how IC-TA problem can be described in an Integer Linear Programming (ILP) framework. We refer to it as {\bf ILP-IC-TA}. Let $\{x^+_1,x^+_2,...\}$ be integer variables such that if $t^+_i \in T^*$ then $x^+_i = 1$, else $x^+_i = 0$. Similarly, $\{x^-_1,x^-_2,...\}$ are integer variables such that if $t^-_i \in T^*$ then $x^-_i = 1$, else $x^-_i = 0$. Let $\{y_1,y_2,...\}$ be integer variables such that  $y_j = 1$ if $a.v_j$ is covered by either positive or negative tag. The ILP version of IC-TA problem is given by Equation~\ref{equ:ILP-IC}.
\begin{equation*}
\label{equ:ILP-IC}
\begin{aligned}
& \underset{}{\text{Maximize}}
& & \mathrm \sum_{a.v_j} y_j  \\ 
& \text{subject to}
& & \sum x^+_i + \sum x^-_i \leq k \\ 
&&& \frac{\sum x^+_i}{\sum x^-_i} = \frac{\alpha}{1-\alpha}\\
&&& \textsc{rel}(T^*) \geq \beta \cdot \textsc{rel}_{max}^{T,k} \\ 
&&& \sum_{a.v_j \in t^+_i} x^+_i + \sum_{a.v_j \in t^-_i} x^-_i \geq y_j \\ 
&&& y_j \in \{0,1\} ~(if~ y_j=1 ~then ~a.v_j ~is ~covered)\\
&&& x^+_i \in \{0,1\} ~(if ~x^+_i=1 ~then ~t^+_i ~is ~selected)\\
&&& x^-_i \in \{0,1\} ~(if ~x^-_i=1 ~then ~t^-_i i~s ~selected)\\
\end{aligned}
\end{equation*}

The first three constraints are related to the size of the $T^*$, polarity, and relevance and the last constraint shows that $a.v_j$ is covered if at least one tag (positive or negative) which are dependent to $a.v_j$ are selected. Note that the ILP-IC-TA only works well for moderate-sized problem. We next develop a practical algorithm to solve IC-TA problem efficiently.

\vspace{-0.10in}
\subsection{Approximation Algorithm (A-IC-TA)}
\label{subsec:approx-IC}
In order to solve IC-TA problem, we consider the Max-Coverage problem with group budget constraints (MCG) problem variant in Chekuri et al.'s paper~\cite{Chekuri:2004}, where given $S = \{ S_1, S_2 , ...\}$ as a collection of sets where each set $S_i$ is a subset of a ground set $\mathcal{X}$ and $S$ is partitioned into groups $G_1, G_2, ..., G_m$, the goal is to pick $k$ sets from $S$ such that at most $k_i$ be picked from each group $G_i$ and cardinality of their union is maximum. The authors in~\cite{Chekuri:2004} proposed a greedy solution with a 2-approximation algorithm.

In our problem, the set $S$ is the set of rules $\Re =\{ \{a.v\}\rightarrow t_x \}$ which is partitioned into two groups based on the tags sentiments. We use the similar greedy approach in~\cite{Chekuri:2004} and we check an extra constraint for the relevance. Intuitively, the greedy approach will iteratively picks those relevant tags that cover the maximum number of uncovered item attribute values.

Algorithm~\ref{algo:Greedy2} is the pseudo code for our algorithm, denoted as \textbf{A-IC-TA}. The A-IC-TA algorithm iteratively picks tags from $T$ that cover the maximum number of uncovered item attribute values such that the number of positive and negative tags are $k_1 = \lceil \alpha k \rceil$, $k_2 = k-k_1$ and $\textsc{rel}(T^*) \geq \beta \cdot \textsc{rel}_{max}^{T,k}$. If we assume all tags in $T^+$ and $T^-$ are sorted by their relevance, the $\textsc{rel}_{max}^{T,k}$ is the summation of the first $k_1$ positive tags and $k_2$ negative tags in the sorted list . More specifically, let us assume a positive tag $t_y$ is picked. At step $x$, where there are $x-1$ tags in $T^*$, Algorithm~\ref{algo:Greedy2} iteratively adding one tag with highest coverage to $T^*$, where its relevance score is atleast $\beta \cdot \textsc{rel}_{max}^{T,x}$.

\noindent \textsc{Example} [continued]: \textit{
In the running example, for $k=2$, $\alpha = 0.5$, and $\beta = 0.5$, Algorithm~\ref{algo:Greedy2} returns $T^*$ = \{{\tt stylish}, {\tt blurry pictures}\}. In first iteration, the highest relevance score of the positive tags $\textsc{rel}_{max}^{T,1}$ is $0.3$. Among the positive tags {\tt super~cool} and {\tt stylish} has relevance larger than $0.15 = 0.5 \cdot 0.3$ and coverage score $3$ and $4$. Thus  {\tt stylish} with highest coverage score of $4$ will be selected. Next, the highest relevance score of the negative tags is $0.15$, among all the negative tags whose relevance are larger than $0.075 = 0.5 \cdot 0.15$,  {\tt blurry pictures} with highest coverage of $3$ will be selected.}

\begin{theorem}
\label{th:IC-approxRatio}
The A-IC-TA Algorithm provides near optimal solution with $2$-approximation factor.
\end{theorem}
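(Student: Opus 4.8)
The plan is to reduce the analysis to the known $2$-approximation for the Max-Coverage problem with group budget constraints (MCG) of Chekuri et al.~\cite{Chekuri:2004}, exploiting the same correspondence between IC-TA and two-group MCG that was used in Theorem~\ref{th:comp_Prob2}. First I would record two structural facts about the objective $\textsc{cov}_{IC}(T^*) = |\bigcup_{t_x \in T^*}\textsc{cov}(t_x,i)|$ of Equation~\ref{eq:UC}: it is monotone non-decreasing (adding a tag can only enlarge the union of covered attribute values) and submodular (the marginal number of newly covered attribute values contributed by any tag can only shrink as $T^*$ grows). These are the defining properties of a coverage function and both follow directly from the fact that $\textsc{cov}_{IC}$ is the cardinality of a set union.

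Second, I would cast the feasible region as a partition matroid: the tag vocabulary splits into the positive group $T^+$ and the negative group $T^-$, and a feasible $T^*$ may contain at most $k_1 = \lceil \alpha k\rceil$ tags from $T^+$ and at most $k_2 = k - k_1$ tags from $T^-$. Maximizing a monotone submodular function subject to a partition matroid is exactly the setting in which the greedy rule --- repeatedly add the still-feasible element of largest marginal gain --- returns at least half of the optimum, which is the $2$-approximation established for MCG in~\cite{Chekuri:2004}. Since A-IC-TA (Algorithm~\ref{algo:Greedy2}) is precisely this greedy rule with coverage as the marginal gain and with the group counters enforcing $k_1$ and $k_2$, I would transcribe the standard exchange argument: writing $T^{opt}$ for an optimal set and $A$ for the greedy output, submodularity gives $\textsc{cov}_{IC}(T^{opt}) \le \textsc{cov}_{IC}(A) + \sum_{t \in T^{opt}\setminus A}\big(\textsc{cov}_{IC}(A\cup\{t\}) - \textsc{cov}_{IC}(A)\big)$, and I would then charge each optimal tag $t$ to the greedy pick made from $t$'s own group, using greedy maximality together with the monotone-decreasing marginals to show the charged total is at most $\textsc{cov}_{IC}(A)$, yielding $\textsc{cov}_{IC}(T^{opt}) \le 2\,\textsc{cov}_{IC}(A)$.

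The one place where A-IC-TA departs from plain MCG --- and the step I expect to be the real obstacle --- is the relevance constraint $\textsc{rel}(T^*) \ge \beta\cdot\textsc{rel}_{max}^{T,k}$. The algorithm enforces relevance by filtering out, at each step, every tag whose individual relevance falls below the step-dependent threshold $\beta\cdot\textsc{rel}_{max}^{T,x}$, so A-IC-TA actually runs greedy on a restricted ground set. For the charging argument above to remain valid I must ensure this restriction does not discard the tags that an optimal solution relies on; concretely, I would argue that any relevance-feasible optimal set $T^{opt}$ survives the filtering, so that every $t \in T^{opt}$ stays available to greedy until its group budget is spent, which makes the restricted instance have an optimum no smaller than the true optimum and lets the two-group exchange bound carry over verbatim. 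The delicate point is that the problem's relevance requirement constrains the \emph{sum} of the selected tags' relevances while the algorithm filters on \emph{per-tag} relevance; I would therefore need to verify that the threshold is set loosely enough that the relevance-feasible optimum is never filtered away, after which the $2$-approximation follows from the matroid-greedy bound of~\cite{Chekuri:2004}.
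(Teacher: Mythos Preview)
Your approach matches the paper's: both invoke the Chekuri--Kumar two-group MCG greedy argument, and your submodular/partition-matroid framing is just a repackaging of the per-iteration charging the paper writes out explicitly (define $C'_j$ as the new elements greedy covers at step $j$, argue $|C'_j|\ge|t^o_j\setminus T^*_{gr}|$, and sum).

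One correction: the relevance filter in Algorithm~\ref{algo:Greedy2} is not per-tag as you describe. At step $x$ it tests $\textsc{rel}(T^*\cup t_y)\ge\beta\cdot\textsc{rel}_{max}^{T,x}$, i.e., the \emph{cumulative} relevance of the current greedy set together with the candidate, against a threshold that grows with $x$. So the restricted ground set changes from step to step and depends on what greedy has already chosen, not just on the candidate's own score.

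Your instinct that the relevance constraint is the fragile step is well placed. The paper handles it in a single clause---``the set $t^o_j$ was also available and the relevance constraint should have satisfied''---which is exactly the assertion you say you would still need to verify: that at every step $j$, the optimal tag $t^o_j$ (in some fixed ordering of $T^{opt}$) passes the filter $\textsc{rel}(T^*_{j-1}\cup t^o_j)\ge\beta\cdot\textsc{rel}_{max}^{T,j}$. The paper does not spell out why this holds, so your proposal is no less complete than the published proof on this point; if you want to close the gap fully, that is the lemma to isolate and prove.
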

\begin{proof}
The proof follows from the $2$-approximation factor proof of the algorithm for solving the Max-Coverage with group budget constraints (MCG) problem in~\cite{Chekuri:2004} with additional constraint over the relevance. We are given an integer $k$, and an integer bound $k_1$  and $k_2$ for two sets $T^{+^*}$ and $T^{-^*}$  i.e., positive and negative tags. A solution is a subset $T^* \subseteq T$ such that $|T^*| \leq k$, $|T^* \cap  T^{+^*}| \leq k_1$, $|T^* \cap  T^{-^*}| \leq k_2$,  and $\textsc{rel}(T^*) \geq \beta \cdot \textsc{rel}_{max}^{T,k}$. The objective is to find the solution such that the number of item attribute values covered by $T^*$ is maximized. Without loss of generality we assume that $k_1$, $k_2$ is equal to one, otherwise we make a copies of each set $T^{+^*}$ and $T^{-^*}$. 

In jth iteration, let $t^g_j$ be the tag that greedy Algorithm~\ref{algo:Greedy2} (A-IC-TA ) picks and let $t^o_j$ be the tag that OPT picks. We let $C'_j = t^g_j \setminus \cup^{j-1}_{h=1} t^g_h$ denote the set of new item attribute values that A-IC-TA adds in jth iteration. Let $\textsc {cov}_{IC}(T^*_{gr}) = |\cup_j t^g_j|$ and $\textsc {cov}_{IC}(T^*_{op}) = |\cup_j t^o_j|$ denote the coverage of the A-IC-TA and optimal solution.

We first show that for $1 \leq j \leq k$, $|C'_j| \geq |t^o_j \setminus T^*_{gr}|$. Obviously when $t^o_j \setminus T^*_{gr} = \emptyset$, it holds. When the greedy algorithm A-IC-TA picked tag $t^g_j$, the set $t^o_j$ was also available and the relevance constraint should have satisfied but greedy didn't picked it because $|C'_j|$ was atleast $| t^o_j - \cup^{j-1}_{h=1} t^g_h |$. Since $\cup^{j-1}_{h=1} t^g_h \subseteq \textsc T^*_{gr}$, $|C'_j|$ is atleast $|t^o_j \setminus T^*_{gr}|$.
\begin{eqnarray*}
\textsc {cov}_{IC}(T^*_{gr}) & = & \sum_{j} |C'_j| \\
&\geq &\sum_j|t^o_j \setminus T^*_{gr}|\\
&\geq & |\cup_j t^o_j| - \textsc {cov}_{IC}(T^*_{gr}) \\
&\geq & \textsc {cov}_{IC}(T^*_{opt}) - \textsc {cov}_{IC}(T^*_{gr})\\
\end{eqnarray*}  
 Thus $\textsc {cov}_{IC}(T^*_{gr}) \geq \frac{1}{2} \textsc {cov}_{IC}(T^*_{opt})$.
\end{proof}

\begin{algorithm}[t]
\SetAlgoNoLine
  	\SetKwInOut{Input}{Input}
  	\SetKwInOut{Output}{Output}
  	\SetKwFunction{Compute}{Compute}
	\Indm
		\Input{Tag vocabulary $T$, set of rules $\Re = \{\{a.v\}\rightarrow t_x \}$, budget $k > 0$, relevance parameter $0 < \beta \leq 1$, user factor $0 < \alpha \leq 1$}
		\Output{set of tags $T^* \subseteq T$ of size $k$}
	\Indp
	$k_1 = \lceil k \cdot \alpha \rceil$; $k_2 = k - k_1$\;
	$T^* = \emptyset$\;
	\For{$x = 1$ \KwTo $k$}
	{ 
		\For{$t_y \in T \setminus T^*$}{ 
			\If {$(t_y \in T^+$ and $|T^{+^*}|<k_1)$ or $(t_y \in T^-$ and $|T^{-^*}|<k_2)$}{
					{\bf if} {$\textsc{rel}(T^* \cup t_y) \geq \beta \cdot \textsc{rel}_{max}^{T,x}$}
						{\bf  then} \Compute{$\textsc{cov}_{IC}(T^* \cup t_y)$}\;	
			}
		}
		$t_y =  \argmaxl_{t_y \in T \setminus T^*}\textsc{cov}_{IC}(T^* \cup t_y)$\;
		$T^* = T^* \cup t_y$\;
	}
	\Return{$T^*$}
	\caption{IC-TA Algorithm (A-IC-TA)}
	\label{algo:Greedy2}
\end{algorithm}

\section{D\lowercase{ependent}-C\lowercase{overage} T\lowercase{ag}A\lowercase{dvsior} (DC-TA)}
\label{sec:DC-TA}

In this section, we focus on the Dependent-Coverage TagAdvsior (DC-TA) problem. We first propose a graph model for the problem, then analyze its computational complexity and prove that it is NP-complete, and finally develop an exact algorithm and an efficient constant factor approximation algorithm for solving it. 

In order to solve the DC-TA problem, we transform the bipartite graph in Figure~\ref{fig:TAModel} to a labeled graph $G_{DC-TA} = (V_T,E)$, where $V_T$ is the set of nodes associated with the tag vocabulary $T$, and $E \subseteq (V_T \times V_T)$. Each edge $e \in E$ has a label, $l: E \rightarrow \{a.v_i\}$. We define an edge label $l(v_{t_{x_1}}, v_{t_{x_2}})$ as the dissimilarity between two tag nodes $v_{t_{x_1}}, v_{t_{x_2}} \in V_T$. We can consider each tag as a boolean vector of size $m$ (number of item attributes) where bit at location $y$ is $1$ if $a.v_y \in$ {\sc cov}($t_x, i$). Using such a vector representation of the tags, we define label of an edge $(v_{t_{x_1}}, v_{t_{x_2}})$ as a set of all different item attribute values. In other words$l(v_{t_{x_1}}, v_{t_{x_2}}) = \{a.v_i\}$, where $i$ is a bit location which are different. In previous proposed model in~\cite{Nazi:2015}, we have used the hamming distance between the vector representation as the edge weight. However, using the absolute distance would not be enough for DC-TA problem. In the new proposed model the dissimilarity between multiple tags would be the cardinality of the union of the edge labels.
In our framework, $T$ is partitioned into two disjoint sets: $T^+$ and $T^-$ based on tag sentiment. Thus there can be three kind of node-to-node connectivity: $v_{t^{+}_{x_1}}$ ($t^+_{x_1} \in T^+$) is connected to $v_{t^{+}_{x_2}}$ ($t^+_{x_2} \in T^+$), $v_{t^{-}_{w_1}}$ ($t^-_{w_1} \in T^-$) is connected to $v_{t^{-}_{w_2}}$ ($t^-_{w_2} \in T^-$), and  $v_{t^{+}_{x_1}}$ ($t^+_{x_1} \in T^+$) is connected to $v_{t^{-}_{w_2}}$ ($t^-_{w_2} \in T^-$). 
The first two connectivities are intra-edges and the third belongs to the category of cross-edges.

\begin{figure*}
\centerline{
\includegraphics[height=56mm, width = 165mm]{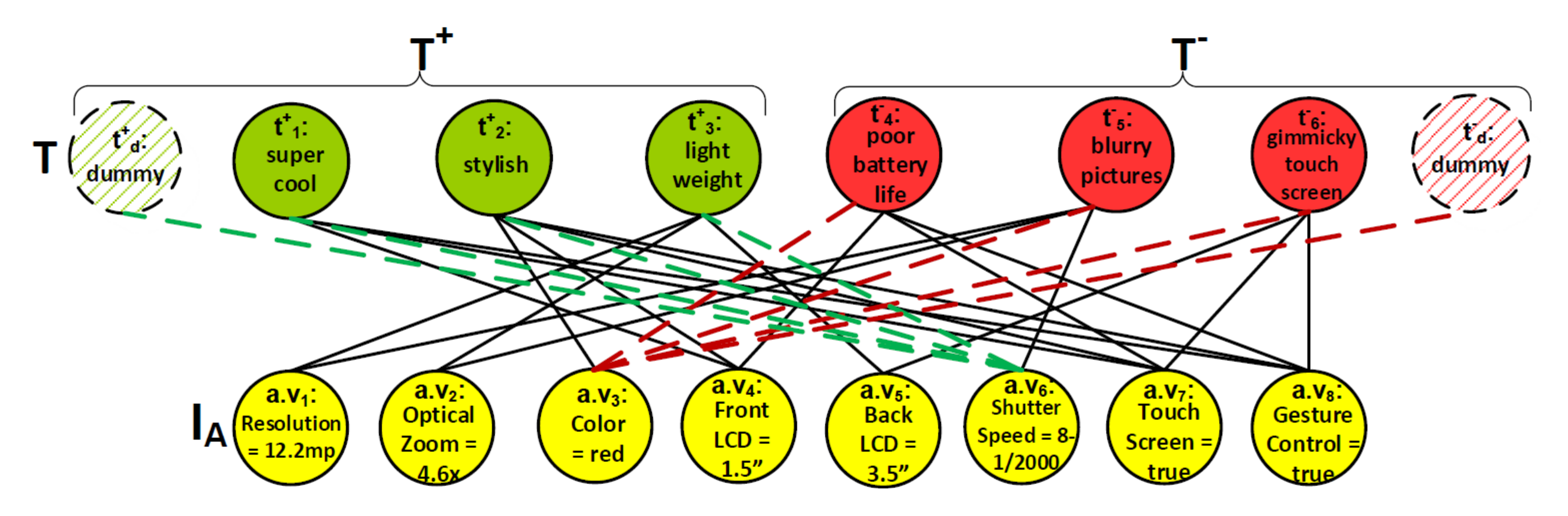}}
\vspace{-0.2in}
\caption{{TA Graph model of the Running Example with dummy nodes and edges.}}
\label{fig:TAModel-dummy-example}
\end{figure*}

Recall that the coverage function $\textsc {cov}_{DC}(T^*)$ discussed in Equation~\ref{eq:DC} is based on three conditions that considers both positive and negative tags for an attribute value if it exists; otherwise, it focuses on either the positive tag or the negative tag. We argue that we can reduce the last two conditions to the first one by introducing {\em dummy nodes and edges}. In other words, for attribute values with only positive tags, selecting any negative tag would not influence their coverage; hence we can add a dummy negative tag $t_d^-$ and add dummy edges from those attribute value nodes to all the negative tags. Similarly, for attribute values with only negative tags, selecting any positive tag would not influence their coverage and we can add dummy positive tag $t_d^+$ and add dummy edges from those attribute value nodes to all the positive tags.

Figure~\ref{fig:TAModel-dummy-example} shows the original bipartite graph in Figure~\ref{fig:TAModel} with dummy nodes and edges for the running example in Table~\ref{tbl:samsungRules}. Since node {\sf \small Color=Red} is not covered by any of the negative tag nodes \{{\tt poor battery life}, {\tt blurry pictures}, {\tt gimmicky touchscreen}\}, we add a dummy negative tag $t_d^-$  (red shaded area) and dummy edges (red dotted lines) from node {\sf \small Color=Red} to all the negative tags. Similarly, the dummy positive tag $t_d^+$ (green shaded area) is added and dummy edges (green dotted lines) are added from {\sf \small Shutter~Speed=8-1/2000} to all positive tag nodes \{{\tt super cool}, {\tt stylish}, {\tt lightweight}, $t_d^+$\}. Figure~\ref{fig:DC-TAModel} shows the graph $G_{DC-TA}$ of our running example in Table~\ref{tbl:samsungRules} having $8$ nodes $T$= $\{t^+_1$, $t^+_2$, $t^+_3$, $t_d^+$, $t^-_4$, $t^-_5$, $t^-_6$, $t_d^-,\}$ = \{{\tt super cool}, {\tt stylish}, {\tt lightwe\-ight}, $t_d^+$, {\tt dummy positive}, {\tt poor battery life}, {\tt blurry pictures}, {\tt gimm\-icky touchscreen}, {\tt dummy negative}, $t_d^-$\}; the label of an edge $(t_i,t_j)$ shows the item attribute values which are not covered by $t_i$ and $t_j$, i.e, in vector representation of the tags, those bits which are different. For example, the edge label between the $t^+_1$:{\tt super~cool} and $t^-_4$:{\tt poor battery life} is $l(t^+_1, t^-_4) = \{ a_3, a_6\}$. By Figure~\ref{fig:TAModel-dummy-example}, $t^+_1$:{\tt super~cool} is connected to $a.v_4$:{\sf \small Fro\-nt LCD=1.5''}, $a.v_6$:{\sf \small Shutter Speed=8-1/2000}, $a.v_7$:{\sf \small Touchscreen=true}, $a.v_8$:{\sf \small Gesture Control=true}. The vector representation of $t^+_1$ is $[0,0,0,$ $1,0,1,1,1]$. Similarly, $t^-_4$ can be represented as $[0,0,1,1,0,0,1,1]$, i.e, they are different in $a_3$ and $a_6$. Note that the size of the edge label show the dissimilarity between two tags measured by Hamming metric. The Hamming distance between $t^+_1$ and $t^-_4$ is $|l(t^+_1, t^-_4)| = |\{ a_3, a_6\}| = 2$. 

\begin{figure}[t]
\vspace{-0.1in}
\centerline{
\includegraphics[height=84mm,width = 93mm]{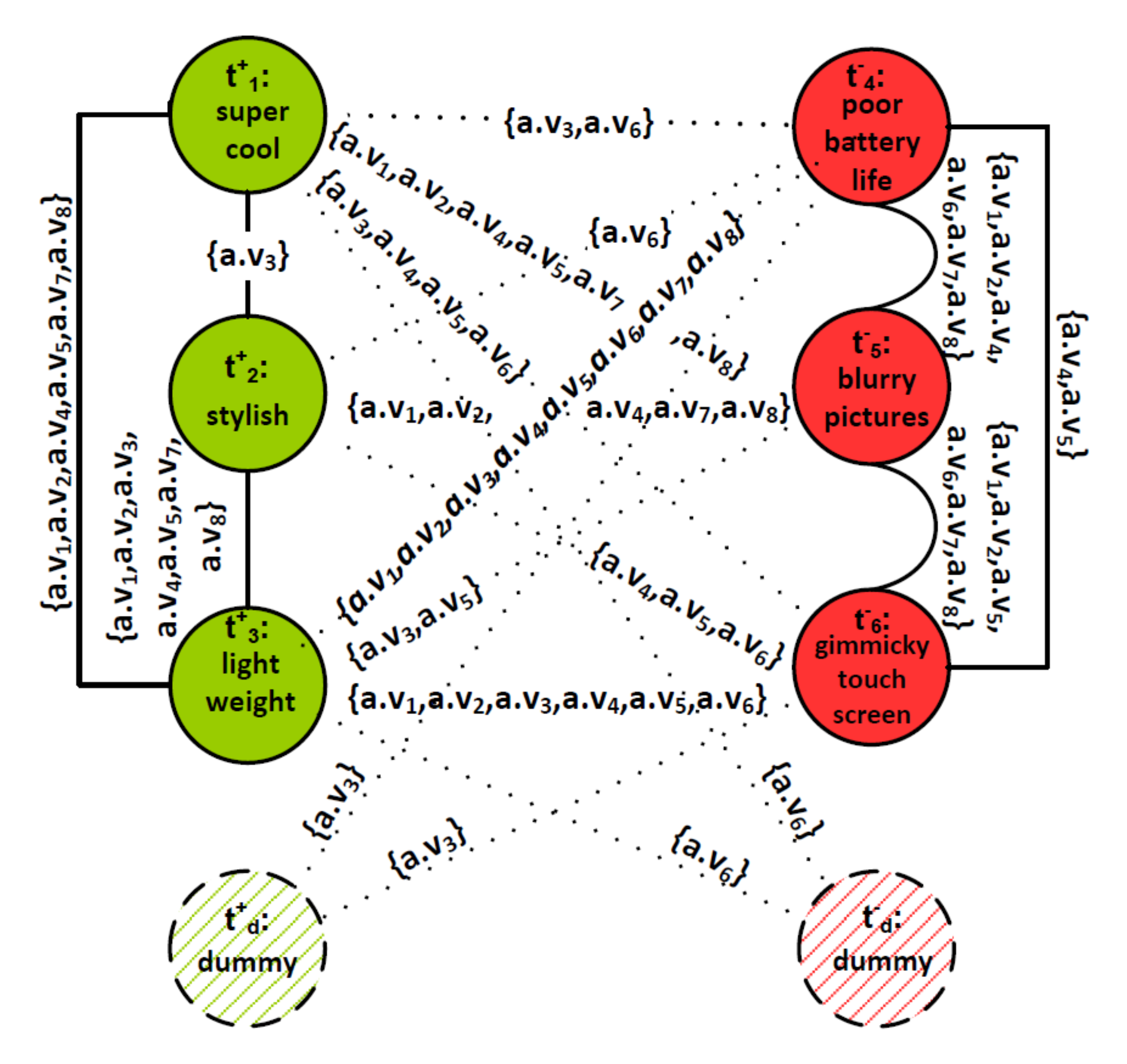}}
\vspace{-0.2in}
\caption{{DC-TA Graph model of Running Example}}
\label{fig:DC-TAModel}
\end{figure}

Our objective in this problem is to maximize $\textsc {cov}_{DC}(T^*)$. Considering this transformed labeled graph model, the goal is to minimize the number of item attribute values which are not covered. We would select positive tags  $T^{+^*}$ and negative tags $T^{-^*}$ from nodes in $V_{T^+}$ and $V_{T^-}$ respectively such that the constraints are satisfied and the size of the union of the labels of the cross-edges minus the union of the labels of the intra-edges is minimum in the induced graph. 
Formally, the objective of DC-TA in this graph model is to minimize:
\begin{equation}
\label{equ:core_pol_and}
\vartheta_{DC}(T^*) = |\bigcup_{\substack{t_x \in \{T^{+^*} \cup t_d^+\}\\t_w \in \{T^{-^*}\cup t_d^-\} }} l(v_{t_x},v_{t_w}) \setminus \bigcup_{\substack{t_x,t_w \in T^{+^*}\\t_x,t_w \in T^{-^*}}} l(v_{t_x},v_{t_w})|
\end{equation}

Where the first term is union of the labels of the cross-edges (edges between positive-negative tags) and the second term is the union of the labels of  the intra-edges (edges between positive-positive and negative-negative tags). We can observe that minimizing $\vartheta_{DC}(T^*)$ is equivalent to maximizing the $\textsc {cov}_{DC}(T^*)$. 
$\textsc {cov}_{DC}(T^*)$ is based on the three different conditions over the item attribute values. Due to the inclusion of dummy nodes and edges, the problem reduces to one condition which is maximizing the similarity of positive and negative tags. Clearly, minimizing the positive and negative tags dissimilarity by Equation~\ref{equ:core_pol_and} is equivalent to maximizing the similarity of those tags. Next, we analyze the computational complexity of this problem.

\subsection{Computational Complexity}

\noindent
The decision version of the DC-TA is defined as follows:

Given graph $G_{DC-TA} = (V_T,E)$, non-negative integer budget $k$, relevance parameter $\beta$ ($0 \leq \beta \leq 1$), user factor $\alpha$ ($0 \leq \alpha \leq 1$), and integer threshold $\gamma \geq 0$, is there a set of $T^* \subseteq T$ such that $\vartheta_{DC} \leq \gamma$ subject to: $|T^*| \leq k$, {\sc pol}$(T^*)$ = $\frac{\alpha}{1-\alpha}$ and $\textsc{rel}(T^*) \geq \beta \cdot \textsc{rel}_{max}^{T,k}$.
\begin{theorem}
\label{th:comp_Prob3}
The decision version of the Dependent-Coverage TagAdvsior (DC-TA) problem is NP-Complete.
\end{theorem}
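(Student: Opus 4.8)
The plan is to first show membership in NP and then establish NP-hardness by a reduction from the MAX-SUM Facility Dispersion problem. Membership is immediate: given a candidate set $T^*$, a verifier checks in polynomial time that $|T^*| \le k$, that $\textsc{pol}(T^*) = \alpha/(1-\alpha)$, that $\textsc{rel}(T^*) \ge \beta \cdot \textsc{rel}_{max}^{T,k}$, and, by evaluating Equation~\ref{equ:core_pol_and} (equivalently Equation~\ref{eq:DC}) directly over the tag and attribute-value nodes of $G_{DC-TA}$, that $\vartheta_{DC}(T^*) \le \gamma$. Each test is a single pass over polynomially many rules and attribute values, so the decision version of DC-TA lies in NP.

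For hardness I would reduce from MAX-SUM Facility Dispersion: given $n$ facilities with pairwise distances $d_{ij} \ge 0$ and an integer $p$, decide whether some $p$-subset $P$ satisfies $\sum_{i,j \in P} d_{ij} \ge D$; this problem is NP-hard. It is the natural source because, as argued after Equation~\ref{equ:core_pol_and}, minimizing $\vartheta_{DC}$ is equivalent to maximizing $\textsc{cov}_{DC}(T^*)$, an objective whose value is governed by an aggregate over the pairwise edge labels $l(v_{t_x}, v_{t_w})$ of the chosen tags in $G_{DC-TA}$ — precisely the combinatorial shape of a dispersion sum.

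The construction maps each facility to a tag and each distance $d_{ij}$ to the size of the corresponding edge label in $G_{DC-TA}$, so that the labeled-graph objective tracks the total pairwise distance of the chosen facilities. I would neutralize the two side constraints: set $\beta = 0$, so the relevance constraint $\textsc{rel}(T^*) \ge 0$ always holds, and fix $\alpha$ together with the assignment of facilities to the positive and negative partitions (introducing the dummy nodes $t_d^+, t_d^-$ where needed) so that the polarity constraint only pins down how many tags of each sentiment are drawn, with $k$ chosen to match the budget $p$. With a threshold $\gamma$ obtained from $D$, a dispersion solution of value $\ge D$ would then exist if and only if a DC-TA solution with $\vartheta_{DC}(T^*) \le \gamma$ exists, in both directions.

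The main obstacle is the mismatch between the \emph{sum} of pairwise distances in the dispersion objective and the \emph{cardinality of a union} of edge labels appearing in $\vartheta_{DC}$: a union counts each shared attribute value only once, whereas the dispersion sum counts every pair independently. Bridging this requires encoding the distances as (essentially) disjoint blocks of dedicated attribute values, so that the union of the relevant labels collapses to the sum of their sizes. The delicate point is that each edge label is the symmetric difference of the two tags' coverage vectors, so any attribute value covered by one tag automatically enters the label of every pair in which the other endpoint omits it; the encoding must control this spillover and also reconcile the cross-edge-minus-intra-edge form of Equation~\ref{equ:core_pol_and} with a plain dispersion sum. Once the encoding is fixed, bounding the construction size by a polynomial and checking the two directions of the equivalence are routine.
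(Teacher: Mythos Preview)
Your proposal is essentially the paper's approach: both reduce from MAX-SUM Facility Dispersion, map facilities to tag nodes and pairwise distances to edge-label sizes in $G_{DC-TA}$, set $\beta=0$ to kill the relevance constraint, and handle the union-versus-sum mismatch by making the relevant edge labels pairwise disjoint so the cardinality of their union equals the dispersion sum.

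The one substantive difference is in how the cross-edge term of Equation~\ref{equ:core_pol_and} is neutralized. You propose to split the facilities across both sentiment classes and then engineer the encoding so the cross-edge-minus-intra-edge expression still collapses to a pure dispersion sum; you flag this as ``the delicate point'' and leave the encoding to be fixed. The paper sidesteps this entirely by setting $\alpha=1$ (i.e., $k_1=p$, $k_2=0$), so no negative tags are selected at all, and additionally stipulating that the positive and negative tag families cover identical attribute-value sets so every cross-edge label is empty. With only positive tags in $T^*$ and disjoint intra-edge labels among them, $\vartheta_{DC}(T^*)$ reduces directly to (the negation of) the sum of pairwise Hamming distances. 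This is a cleaner and shorter route than the two-sided partition you sketch, and it dissolves precisely the obstacle you identified rather than encoding around it.
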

\begin{proof}
It is obvious that the decision version of the DC-TA is in NP. To verify NP-Completeness, we reduce the MAX-SUM Facility Dispersion problem~\cite{Ravi:1994,Erkut:1990,Hansen:1988} to our problem and argue that a solution to MAX-SUM Facility Dispersion exists, if and only if, a solution to our problem exists. In MAX-SUM Facility Dispersion problem, given a set of $V = \{v_1, v_2, ..., v_n\}$ of $n$ nodes, a non-negative distance $w(v_i,v_j)$ for each pair of nodes $v_i$, $v_j$, and an integer $p$ smaller than $n$, the goal is to find a subset $P= \{v_{i_1},v_{i_2}, ..., v_{i_p}\}$ of $V$, with $|P|=p$, such that sum of distances are maximized. This problem was proved to be NP-Complete~\cite{Erkut:1990,Hansen:1988}. We construct an instance of DC-TA problem such that the solution for MAX-SUM Facility Dispersion exists, if and only if, the solution to our DC-TA instance exists.

We create a graph $G_{DC-TA} = (V_T,E)$ such that for every $v_i \in V$ there is a corresponding node $v_{t_{x_i}} \in V_T$ and a distance $w(v_i,v_j)$ corresponds to the Hamming distance of two tags $t_{x_1}$ and $t_{x_2}$, i.e, $|l(v_{t_{x_1}}, v_{t_{x_2}})|$. Let in this DC-TA instance, $\alpha = 1$, i.e., $k_1 = p$, and $k_2 = 0$ (only $p$ positive tags should be selected). Also by setting $\beta = 0$ the relevance constraint will be relaxed because $\textsc{rel}(T^*) \geq 0$ is always true. Let in DC-TA instance, positive and negative tags cover exactly same item attribute values, i.e., the label of all cross-edges is an empty set, i.e., distance between positive and negative tags is $0$. In DC-TA instance, assume label of the edges among positive tags are disjoint, i.e $|\cup_{\substack{t_x,t_w \in T^{+^*}}} l(v_{t_x},v_{t_y})| $ is equal to the sum of the hamming distance. Thus, the DC-TA problem collapses to that of finding $p$ positive tags such that $-\sum_{\substack{t_x,t_w \in T^{+^*}}} |l(v_{t_x},v_{t_y})|$ is minimum or sum of the hamming distances is maximum.
Thus, in this DC-TA instance, if $T^*$ with $p$ positive tags and zero negative tags maximizes the $\textsc {cov}_{DC}(T^*)$, then the corresponding nodes in $V$ maximizes the sum of distances in MAX-SUM Facility Dispersion. Thus, DC-TA problem is NP-Complete.
\end{proof}

\subsection{Exact Algorithms}
\label{subsec:exact-DC}
Similar to Section~\ref{subsec:exact-IC}, a brute-force approach to solve the DC-TA problem enumerates all possible $^nC_k$ combinations of tags in order to return the optimal set maximizing coverage \textsc{cov}$_{DC}(T^*)$ (or, minimizing $\vartheta_{DC}(T^*)$) and satisfying the constraints. We refer to this computationally prohibitive exact algorithm of DC-TA as {\bf E-DC-TA}. We next show how DC-TA problem can be described in an Integer Linear Programming (ILP) framework. We refer to it as {\bf ILP-DC-TA}. Let $\{x^+_1,x^+_2,...\}$ be integer variables such that if $t^+_i \in T^*$ then $x^+_i = 1$, else $x^+_i = 0$. Similarly, $\{x^-_1,x^-_2,...\}$ is integer variables such that if $t^-_i \in T^*$ then $x^-_i = 1$, else $x^-_i = 0$. Let $\{y_1,y_2,...\}$ be integer variables. Since an item attribute is covered if both positive and negative tags are selected so when $a_j$ is covered then $y_j = 2$. The ILP version of DC-TA problem is given by Equation~\ref{equ:ILP-DC}.

\vspace{-0.1in}
\begin{equation*}
\label{equ:ILP-DC}
\begin{aligned}
& \underset{}{\text{Maximize}}
& & \mathrm \sum_{a.v_j} y_j \\ 
& \text{subject to}
& & \sum x^+_i + \sum x^-_i \leq k+2 \\ 
&&& \frac{\sum_{t_i \in T^+} x^+_i}{\sum_{t_i \in T^-} x^-_i} = \frac{\alpha}{1-\alpha}\\
&&& \textsc{rel}(T^*) \geq \beta \cdot \textsc{rel}_{max}^{T,k} \\ 
&&& \sum_{a.v_j \in t^+_i \cap t^-_i} x^+_i + x^-_i \geq y_j \\
&&& y_j \in \{0,2\} ~(if~ y_j=2 ~then ~a.v_j ~is ~covered )\\
&&& x^+_d = 1 , x^-_d = 1 ~(dummy ~tags ~are ~selected)\\
&&& x^+_i \in \{0,1\} ~(if ~x^+_i=1 ~then ~t^+_i ~is ~selected)\\
&&& x^-_i \in \{0,1\} ~(if ~x^-_i=1 ~then ~t^-_i ~is ~selected)\\
\end{aligned}
\end{equation*}

\noindent Recall that dummy nodes and edges are added to reduce the last two conditions of the Equation~\ref{eq:DC}. We add the two dummy tags $t^+_d$, $t^+_d$ to the result set, i.e. $x^+_d $ and $x^-_d$ are set to one. Thus size of the $T^*$ is increased by 2. The next two constraints are related to the polarity, and relevance and the last constraint shows that $a_j$ is covered if both dependent positive or negative tags are selected. We next develop an efficient algorithm to solve IC-TA problem.

\subsection{Approximation Algorithm (A-DC-TA)}
\label{subsec:approx-DC}

Given graph $G_{DC-TA} = (V_T,E)$ as DC-TA model, relevance parameter $\beta$, and user factor $\alpha$, the goal is to select $k_1 = \lceil\alpha k\rceil$ positive tags and $k_2 = k-k_1$ negative tags such that $\textsc{rel}(T^*) \geq \beta \cdot \textsc{rel}_{max}^{T,k}$ and $\vartheta_{DC}(T^*)$ is minimum.

First, we show that such $\vartheta_{DC}(T^*)$ is not submodular. In submodular functions the incremental gain of adding an element to a set decreases as the size of the set increases, i.e., in the context of our paper, for all tags $t_x$  and $S \subseteq T$, $F(S\cup\{t_x\})-F(S) \geq F(T\cup{t_x})-F(T)$. The authors in~\cite{Nemhauser:1978} proved that if a function is monotone and submodular, the greedy approach provides near optimal solution with $(1-1/e)$-approximation factor. We prove that $\vartheta_{DC}(T^*)$ is not submodular, thus, there is not any greedy approach provides near optimal solution with $(1-1/e)$-approximation factor for DC-TA problem. Next we propose an approximation algorithm, denoted by \textbf{A-DC-TA} and we prove its approximation factor.

\vspace{-.05in}
\begin{theorem}
\label{th:prob3_Submodularity}
The function $\vartheta_{DC}(T^*)$ is not submodular.
\end{theorem}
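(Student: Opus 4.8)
The plan is to refute submodularity by exhibiting a single explicit counterexample, since submodularity is a universally quantified statement and one violating instance suffices. Recalling the diminishing-returns form stated in the excerpt, $\vartheta_{DC}$ is submodular if and only if for every nested pair $A \subseteq B$ and every tag $t \notin B$ we have $\vartheta_{DC}(A \cup \{t\}) - \vartheta_{DC}(A) \geq \vartheta_{DC}(B \cup \{t\}) - \vartheta_{DC}(B)$. Because $\vartheta_{DC}$ counts the item attribute values left \emph{uncovered} (so that minimizing it is the same as maximizing $\textsc{cov}_{DC}$), it is convenient to use the equivalence $\vartheta_{DC}(T^*) = m - \textsc{cov}_{DC}(T^*)$, where $m$ is the total number of attributes; adding a useful tag then yields a nonpositive marginal in $\vartheta_{DC}$, and care is needed so that the violation points in the correct direction for this minimization objective.

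Concretely, I would instantiate $A$, $B$, and $t$ on the running example of Table~\ref{tbl:samsungRules} and Figure~\ref{fig:DC-TAModel}, exploiting that $t^+_1$ (\texttt{super cool}) and $t^+_2$ (\texttt{stylish}) both cover $\{a.v_4, a.v_7, a.v_8\}$ while the negative tag $t^-_4$ (\texttt{poor battery life}) also covers $\{a.v_4, a.v_7, a.v_8\}$. Take $A = \{t^-_4\}$, $B = \{t^-_4, t^+_2\}$ (so $A \subseteq B$), and add $t = t^+_1 \notin B$. Evaluating $\textsc{cov}_{DC}$ via Equation~\ref{eq:DC} on the four sets gives $\textsc{cov}_{DC}(A) = 0$ (only a negative tag is present, so the jointly covered $a.v_4, a.v_7, a.v_8$ do not yet count), $\textsc{cov}_{DC}(A \cup \{t\}) = 3$ (now $t^+_1$ supplies the missing positive side and those three attributes become covered), $\textsc{cov}_{DC}(B) = 4$ (namely $a.v_4, a.v_7, a.v_8$ jointly, plus the positive-only $a.v_3$ contributed by $t^+_2$), and $\textsc{cov}_{DC}(B \cup \{t\}) = 4$ (adding $t^+_1$ is redundant because $t^+_2$ already covers those attributes). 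Translating through $\vartheta_{DC} = m - \textsc{cov}_{DC}$, the two marginals are $\vartheta_{DC}(A \cup \{t\}) - \vartheta_{DC}(A) = -3$ and $\vartheta_{DC}(B \cup \{t\}) - \vartheta_{DC}(B) = 0$; since $-3 \not\geq 0$, the diminishing-returns inequality fails, so $\vartheta_{DC}$ is not submodular.

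The conceptual heart, and the main obstacle, is choosing the triple so that the \emph{redundancy} of overlapping coverage (rather than the conjunction) drives the violation: the marginal worth of $t^+_1$ is large while the jointly covered attributes are still open, yet it collapses to zero once the equally-covering $t^+_2$ has already been selected, and it is precisely this collapse that the minimization objective $\vartheta_{DC}$ fails to bound from above. A natural but incorrect first attempt is to use the purely conjunctive case (a both-covered attribute that counts only when one positive and one negative are present): that interaction makes $\textsc{cov}_{DC}$ non-submodular as a maximization objective but actually \emph{respects} the inequality for $\vartheta_{DC}$, so it must be avoided here. I would therefore emphasize that the sentiment pairing forced by dependent coverage is what creates overlapping, same-sentiment coverage structures, and that the interplay of this pairing with coverage redundancy is what breaks the property. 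Finally, since the Nemhauser--Wolsey--Fisher guarantee applies only to monotone submodular functions, this counterexample rules out a direct $(1-1/e)$ greedy for DC-TA and motivates the dedicated approximation algorithm \textbf{A-DC-TA} developed next.
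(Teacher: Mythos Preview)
Your counterexample is correct and the arithmetic checks out against Table~\ref{tbl:samsungRules}: with $A=\{t_4^-\}$, $B=\{t_4^-,t_2^+\}$, $t=t_1^+$, the marginals of $\vartheta_{DC}$ are $-3$ and $0$, violating diminishing returns. Your route, however, is genuinely different from the paper's. The paper argues via the \emph{complementarity} effect: take $T_2\subseteq T_1$ that share the same positive tags but $T_1$ has extra negative tags, and add a positive tag $t_x^+$ whose attributes are negatively covered only in $T_1$; then the marginal gain in $\textsc{cov}_{DC}$ is strictly larger at the bigger set. That argument exhibits \emph{increasing} marginals for $\textsc{cov}_{DC}$, i.e., non-submodularity of $\textsc{cov}_{DC}$, which under $\vartheta_{DC}=m-\textsc{cov}_{DC}$ is non-\emph{super}modularity of $\vartheta_{DC}$ rather than non-submodularity. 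You anticipated exactly this and deliberately switched to a \emph{redundancy}-based instance (two positive tags with identical coverage), which is what is needed to falsify submodularity of $\vartheta_{DC}$ as literally stated.

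One caveat on your closing sentence: the Nemhauser--Wolsey--Fisher $(1-1/e)$ guarantee pertains to \emph{maximizing} a monotone submodular objective, so the obstruction to a direct greedy for DC-TA is non-submodularity of $\textsc{cov}_{DC}$, not of $\vartheta_{DC}$. Your counterexample shows $\textsc{cov}_{DC}$ is not supermodular, which by itself does not preclude $\textsc{cov}_{DC}$ from being submodular. If you want that conclusion to stand, you should also include a complementarity instance (e.g., $A=\{t_4^-\}$, $B=\{t_4^-,t_5^-\}$, $t=t_3^+$, where $a.v_1,a.v_2$ become jointly covered only at $B$), which is essentially the paper's argument; together the two instances show $\textsc{cov}_{DC}$ is neither sub- nor supermodular, and both the literal theorem and the algorithmic motivation follow.
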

\begin{proof}
Let $T_1 = T^+_1 \cup T^-_1$ be the set of positive and negative tags for item $i$ covering item attribute values $\{a.v\}$ with $I_{A1} \subseteq I_A$ as schema. Let $T_2 \subseteq T_1$ covers attribute values $\{a.v\}$ with schema $I_{A2} \subseteq I_{A1}$, such that $T_2$ has the same positive tags $T^+_2 = T^+_1$ but $T^-_1$ has more negative tags than the $T^-_2$, i.e., in $T_1$ there are some values for attributes  $\{a\}$ that are cover by negative tags, $\{a\}\subseteq I_{A1}$, which those attribute values are not covered by $T_2$, $\{a\} \nsubseteq I_{A2}$. Now assume we want to add to both sets a positive tag $t_x^+$ that covers some values of attributes $\{a'\} \subseteq \{a\}$. In DC-TA problem every attribute values associated with both positive and negative tags is covered if atleast one from each negative and positive tags are selected. It is clear that adding $t_x^+$ to $T_1$ is more beneficial than adding it to $T_2$ because all values of attributes $a_j \in \{a'\}$ are covered by $T_1$ by both positive and negative tags but they are only covered by $T_2$ by positive tag but not negative. Thus, the incremental gain of adding this tag to a set increased as the size of the set increases, which contradicts with submodularity, where the incremental gain of adding a tag to a set should decreases as the size of the set increases.
\end{proof}

We develop a greedy algorithm~\ref{algo:Greedy3} and theoretically prove that it produces a solution with constant factor approximation of the optimal. The A-DC-TA Algorithm uses the user factor $\alpha$ to find the number of positive and negative tags need to be selected, i.e. $k_1$ and $k_2$. Let $t_x \in T^+ \setminus T^*$ and $t_y \in T^-\setminus T^*$ be the tags with highest relevance score in positive and negative tags which have not been selected yet. Lines $3-12$ of the algorithm iteratively picks the cross-edges $(v_{t_x},v_{t_y}) ,t_x \in T^+, t_y \in T^-$ with the relevance score of atleast $\beta \cdot \textsc{rel}_{max}^{T,k'}$, which add minimum weight to $\vartheta_{DC}(T^*)$ and adds those tags to the $T^*$ until the number of selected positive or negative tags be $k_1$ or $k_2$. If the number of selected positive and negative tags is $k_1$ and $k_2$, the algorithm returns $T^*$ as the top-$k$ tags, otherwise there are still more tags that should be selected from either positive or negative tags (not both). Let us assume $k_2$ negative tags are selected. The algorithm (line $14-22$) finds the new tag $ t_y \in T^+ \setminus T^*$ with the relevance score of atleast $\beta \cdot \textsc{rel}_{max}^{T,k'}$, which add minimum weight to $\vartheta_{DC}(T^*)$. Similarly, if all $k_1$ positive tags are selected but still negative tags are less that $k_2$ then new tag $ t_y \in T^- \setminus T^*$ will be selected (line $23-30$).

\vspace{0.05in}
\noindent \textsc{Example} [continued]: \textit{
In the running example, for $k=2$, $\alpha = 0.5$, and $\beta = 0.5$, solving the problem with practical heuristic Algorithm~\ref{algo:Greedy3} returns $T^*$ = \{{\tt stylish}, {\tt poor battery life}\}. It first finds $t^+_1$ = {\tt super cool} and $t^-_6 $= {\tt gimmicky touchscreen} as the positive and negative tags with highest relevance scores $0.3$ and $0.15$  ($\textsc{rel}_{max}^{T,2}= 0.45$). Then it selects $t^+_2$= {\tt stylish} and $t^-_4$ = {\tt poor~battery~life} because $\vartheta_{DC}(\{t^+_2, t^-_4\}) = 1$ gives the smallest value among other selections and it satisfies the relevance constraint, i.e. \textsc{rel}($\{t_x,t_y\}$) $\geq \beta \cdot \textsc{rel}_{max}^{T,2}$ (their relevance $0.33 = 0.2+0.13$ is greater that $0.225= 0.5 \cdot 0.45$)}

\begin{theorem}
\label{th:core_pol_and_approx}
The proposed heuristic DC-TA algorithm~\ref{algo:Greedy3} produce a solution with $2$-approximation of the optimal, i.e. $\vartheta_{DC}(T^*_{gr})$ $\leq 2\cdot\vartheta_{DC}(T^*_{opt})$.
\end{theorem}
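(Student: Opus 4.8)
The plan is to mirror the analysis of the classical $2$-approximation for metric dispersion problems (the reduction in Theorem~\ref{th:comp_Prob3} already ties DC-TA to MAX-SUM Facility Dispersion), adapting it to the minimization objective $\vartheta_{DC}$ and to the positive/negative pairing structure of A-DC-TA. The first thing I would establish is that the edge labels behave like a metric: since each tag is a binary coverage vector over the $m$ attribute values and $l(v_{t_x},v_{t_w})$ is the set of coordinates on which $t_x$ and $t_w$ differ, its cardinality is exactly the Hamming distance, which obeys the triangle inequality $|l(v_{t_x},v_{t_z})| \le |l(v_{t_x},v_{t_w})| + |l(v_{t_w},v_{t_z})|$. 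This metric property is the structural fact that ultimately produces the factor of $2$.

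Next I would fix notation, writing $T^*_{gr}$ and $T^*_{opt}$ for the greedy and optimal tag sets, and view each as a collection of positive--negative cross pairs, since after the dummy-node reduction every covered attribute requires one selected positive and one selected negative tag, and the dominant term of $\vartheta_{DC}$ in Equation~\ref{equ:core_pol_and} is the union of the cross-edge labels. I would then set up a charging argument that matches each cross pair used by $T^*_{opt}$ to the pair greedy selected at the corresponding iteration. Because at every iteration the greedy rule chooses, among all still-admissible relevant pairs, the one adding the least weight to $\vartheta_{DC}$, the pair greedy commits to is no costlier than any optimal pair still available; the triangle inequality then lets me re-route an optimal pair through a greedy endpoint at a cost of at most twice the greedy increment. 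Summing over all iterations yields $\vartheta_{DC}(T^*_{gr}) \le 2\,\vartheta_{DC}(T^*_{opt})$.

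I would then close the two loose ends created by the algorithm's control flow. First, the relevance filter: I must argue that whenever greedy rejects a pair, the optimal pair it is charged against was itself relevant (so the comparison is legitimate), which follows because both solutions face the same threshold $\beta\cdot\textsc{rel}_{max}^{T,k}$ and greedy scans tags in decreasing relevance order. Second, the unbalanced case $k_1 \ne k_2$: after the paired phase (lines $3$--$12$) one sign is exhausted, and the tail loop (lines $14$--$30$) appends the remaining same-sign tags; I would show each such append is again a least-cost admissible choice, so the charging extends to these singletons against their optimal counterparts through a dummy partner $t_d^+$ or $t_d^-$.

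The step I expect to be the main obstacle is reconciling the set-union-with-subtraction form of $\vartheta_{DC}$ with the additive bookkeeping a charging argument wants. Because the first term is a union (so attributes are not double-counted) and the second term subtracts the intra-edge union, a naive per-pair accounting can both over- and under-count. The delicate part is showing that greedy's incremental weight $\vartheta_{DC}(T^*\cup\{t_x,t_w\})-\vartheta_{DC}(T^*)$ remains comparable, pair-by-pair, to an optimal increment after applying the triangle inequality, despite these union overlaps. I would handle this by bounding the incremental union cost from above by the relevant pairwise Hamming distances and the subtracted intra term from below, thereby reducing the set objective to the additive metric setting in which the factor-$2$ dispersion bound applies cleanly.
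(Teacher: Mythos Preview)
Your route is genuinely different from the paper's. You lean on the Hamming-metric structure and the reduction to MAX-SUM Facility Dispersion from Theorem~\ref{th:comp_Prob3}, setting up a triangle-inequality charging between greedy pairs and optimal pairs. The paper does none of this: its proof is a direct per-iteration accounting of \emph{uncovered} attribute values, structurally the same as the IC-TA argument in Theorem~\ref{th:IC-approxRatio}. It lets $C'_{e_j}$ be the set of attributes still uncovered after greedy's $j$th edge, observes that greedy's choice $e_j$ leaves no more uncovered than the optimal choice $e'_j$ would have at that step, and then bounds $|\cup_j C'_{e_j}|$ by $2|\cup_j C'_{e'_j}|$. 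No metric, no triangle inequality, no re-routing through endpoints; the dispersion connection is used only for hardness, not for the approximation analysis.

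The obstacle you flag---reconciling the union-minus-union form of $\vartheta_{DC}$ with an additive per-pair ledger---is real, and it is precisely what the paper sidesteps by never decomposing $\vartheta_{DC}$ into pairwise Hamming distances at all; it works directly with the uncovered sets, where the union is native. Your proposed fix (upper-bounding incremental union cost by pairwise Hamming distances and lower-bounding the subtracted intra term) is the fragile step: Hamming distances sum, but the label sets overlap arbitrarily, so a pair-by-pair comparison via triangle inequality need not control the union. If you want to salvage a metric argument you would have to show that the incremental $\vartheta_{DC}$ itself behaves submodularly or metrically, which Theorem~\ref{th:prob3_Submodularity} already says it does not. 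The simpler path is the paper's: compare greedy and optimal iteration-by-iteration on the uncovered set directly.
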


\begin{proof}
Algorithm~\ref{algo:Greedy3} picks an edge in each iteration. let us assume in jth iteration, $e_j$ and $e'_j$ be an edge selected by greedy and optimal respectively. $C'_{e_j}$ denotes the set of item attribute values that are not covered in first jth iterations, i.e., $C'_{e_j} =  \cup^{j}_{h=1} C'_{e_h}$. Thus, the number of item attribute values which are not covered by the A-DC-TA would be $\vartheta_{DC}(T^*_{gr_k})  = |\cup_{j} C'_{e_j}| $. Similarly $\vartheta_{DC}(T^*_{opt_k})  = |\cup_{j} C'_{e'_j}|$ shows the number of item attribute values which are not covered by the optimal algorithm. 

Let us assume at step j optimal algorithm picks $e'_j$ but the greedy algorithm picks $e_j$. The reason that greedy algorithm didn't pick the $e'_j$ is that the number of item attribute values that are not covered in j iterations by selecting $e_j$ is less than the the number of item attribute values that are not covered by selecting $e'_j$, i.e., $|C'_{e_j}| \leq |C'_{e'_j} \bigcup \cup^{j-1}_{h=1} C'_{e_h}|$. Thus, $|C'_{e_j}| \leq |C'_{e'_j}| + |\cup^{j-1}_{h=1} C'_{e_h}|$. Since $|\cup^{j-1}_{h=1} C'_{e_h}|$ is at least $|\cup^{k}_{h=1} C'_{e'_h}|$, we have $|C'_{e_j}| \leq |C'_{e'_j} |+ |\cup^{k}_{h=1} C'_{e'_h}|$. Using this inequality we have:

\begin{eqnarray*}
\vspace{-.1in}
\vartheta_{DC}(T^*_{gr_k})  & = & |\cup_{j} C'_{e_j}| \\
&\leq &|\cup_{j} C'_{e'_j}| + |\cup^{k}_{h=1} C'_{e_h}|\\
&\leq &   \vartheta_{DC}(T^*_{opt_k}) + \vartheta_{DC}(T^*_{opt_k})\\
&\leq & 2\vartheta_{DC}(T^*_{opt_k})
\end{eqnarray*}  
Thus the A-DC-TA produces a solution with $2$-approximation of the optimal.
\end{proof}

\begin{algorithm}[h!]
\SetAlgoNoLine
  	\SetKwInOut{Input}{Input}
  	\SetKwInOut{Output}{Output}
	\Indm
		\Input{$G_{DC-TA} = (V_T,E)$, budget $k > 0$, user factor 0 $< \alpha \leq 1$, relevance importance $0 < \beta \leq 1$}
		\Output{set of tags $T^* \subseteq T$ of size $k$}
	\Indp
	$k_1 = \lceil k \alpha \rceil$; $k_2 = k - k_1$\;
	$T^* = \emptyset$\;
	\While{$(k_1 > 0$ and $k_2 > 0)$}{
		$k' = | T^* | + 2$ \;
		\For{$e=(t_x,t_y), (t_x\in T^+ \setminus T^* ,t_y \in T^- \setminus T^*)$}{
		\textbf{if} {\textsc{rel}($T^* \cup \{t_x,t_y\}$) $\geq \beta \cdot \textsc{rel}_{max}^{T,k'}$}{
						\textbf{then} \Compute{$\vartheta_{DC}(T^* \cup \{t_x,t_y\})$}\;}
					}
					$T^* = T^* \cup \argminl_{t_x\in T^+ \setminus T^* ,t_y \in T^- \setminus T^*} \vartheta_{DC}(T^* \cup \{t_x,t_y\})$ \;
		$k_1 = k_1 - 1$; $k_2 = k_2 - 1$\;
	}
	\While {$(|T^*| < k)$}{
		\If{$(k_1 > 0)$}{
			$k' = | T^* | + 1$ \;
			\For{$e=(t_x,t_y), (t_x\in T^{*^+} ,t_y \in T^+ \setminus T^*)$}{
		\textbf{if}  {\textsc{rel}($T^* \cup \{t_x,t_y\}$) $\geq \beta \cdot \textsc{rel}_{max}^{T,k'}$}{
						\textbf{then} \Compute{$\vartheta_{DC}(T^* \cup \{t_x,t_y\})$}\;}
					}
					$T^* = T^* \cup \argminl_{t_x\in T^{*^+},t_y \in T^+ \setminus T^*} \vartheta_{DC}(T^* \cup \{t_x,t_y\})$ \;
		
		}
		\If{$(k_2 > 0)$}{
			$k' = | T^* | + 1$\; 
			\For{$e=(t_x,t_y), (t_x\in T^{*^-} ,t_y \in T^- \setminus T^*)$}{
		\textbf{if}  {\textsc{rel}($T^* \cup \{t_x,t_y\}$) $\geq \beta \cdot \textsc{rel}_{max}^{T,k'}$}{
						\textbf{then} \Compute{$\vartheta_{DC}(T^* \cup \{t_x,t_y\})$}\;}
					}
					$T^* = T^* \cup \argminl_{t_x\in T^{*^-},t_y \in T^- \setminus T^*} \vartheta_{DC}(T^* \cup \{t_x,t_y\})$ \;
		
		}
	}
	\Return{$T^*$}
	\caption{DC-TA Algorithm (A-DC-TA)}
	\label{algo:Greedy3}
\end{algorithm}

\begin{figure*}[ht]
\begin{minipage}[t]{0.32\linewidth}
\centering
\includegraphics[height = 40mm,width = 45mm]{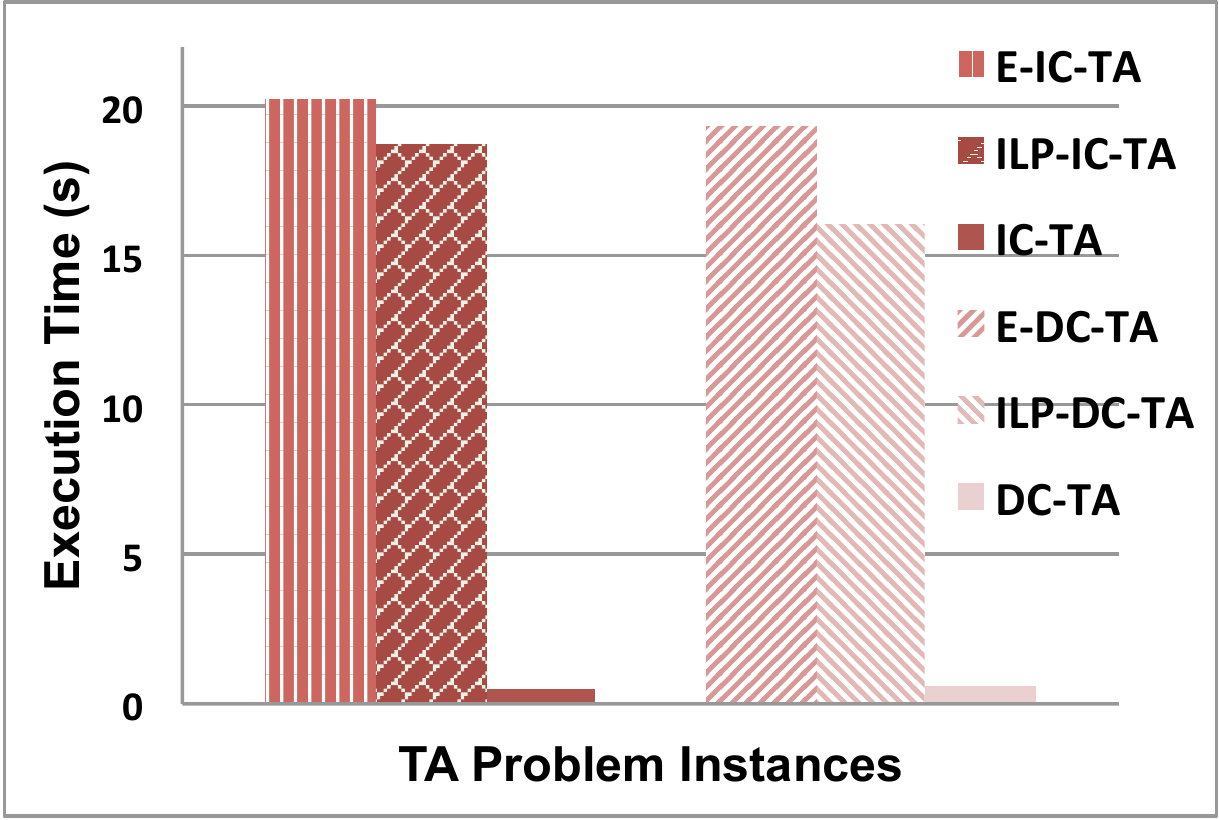}
\caption{{Execution time of TA algorithms with $k=10$, $\alpha=0.5$, $\beta=0.5$}}
\label{allExeTime}
\end{minipage}
\hspace{1mm}
\begin{minipage}[t]{0.32\linewidth}
\includegraphics[height = 40mm, width = 45mm]{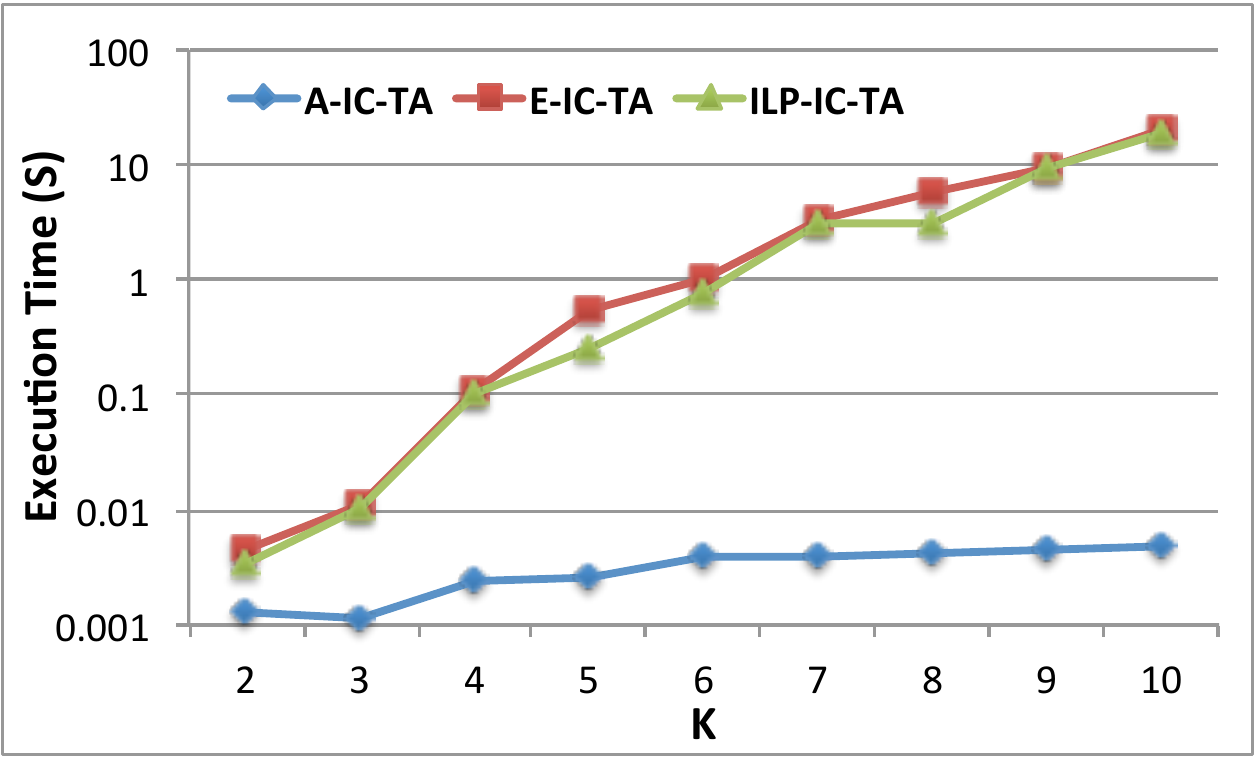}
\caption{{Execution time of IC-TA algorithms by varying $k$, $\alpha=0.5$, $\beta=0.5$}}
\label{fig:ICExeTime}
\end{minipage}
\hspace{1mm}
\begin{minipage}[t]{0.32\linewidth}
\includegraphics[height = 40mm,width = 45mm]{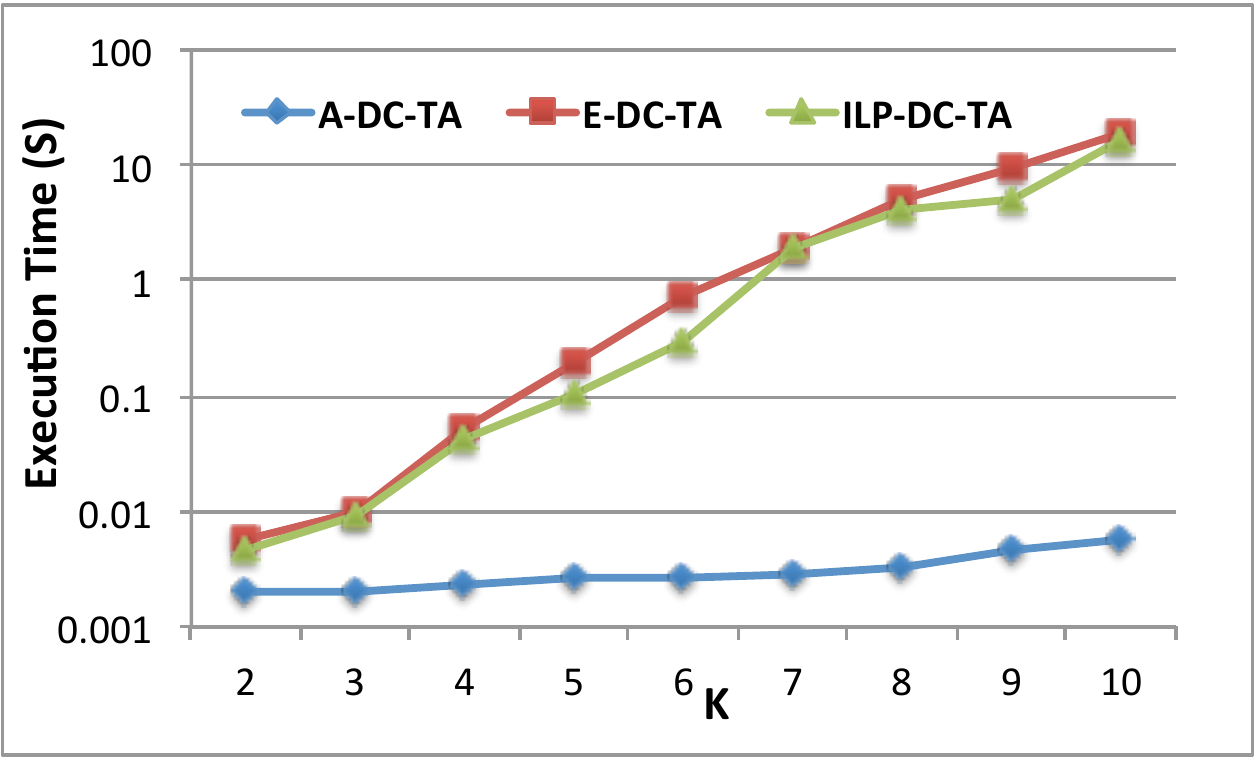}
\caption{{Execution time of DC-TA algorithms by varying $k$, $\alpha=0.5$, $\beta=0.5$}}
\label{fig:DCExeTime}
\end{minipage}
\end{figure*}

\section{Experiments}
\label{sec:expt}
\subsection{Experimental setup}
\vspace{0.01in}
\noindent
\textbf{System configuration}: Our prototype system is implemented in Java with JDK 5.0. All experiments were conducted on an Ubuntu machine with 2.0Ghz Intel processor and 8GB RAM. All numbers are obtained as the average over 10000 runs.

\vspace{0.01in}
\noindent {\bf Datasets:} We conduct a comprehensive set of experiments using both synthetic and real data crawled from the web to evaluate efficiency and quality of our proposed algorithms. For synthetic data, we generated a large boolean matrix of item attributes with positive and negative tags. For real data, we crawled Yahoo! Autos, Walmart and Google Product for building a car dataset and a camera dataset. We use the synthetic dataset for quantitative experiments, and the real dataset for qualitative study. The details of each dataset is described below:

\vspace{0.02in}
\noindent
{\em Synthetic Dataset}: 
We generate a large boolean matrix of dimension 1 million (items)$\times$ $200$ ($100$ attributes + $50$ positive tags + $50$ negative tags). We split the 100 independent and identically distributed attributes into four groups, where the value is set to 1 with probabilities of 0.75, 0.15, 0.10 and 0.05 respectively. For each of the 50 tags, we randomly picked a set of attributes that are correlated to it. A tag is set to 1 if majority of the attributes in its correlated set of attributes have boolean value 1. 

\vspace{0.02in}
\noindent
{\em Real Camera Dataset:} 
We crawl a real dataset of over hundred cameras listed at Walmart \footnote{www.walmart.com}. The Walmart camera data consists of 12,600 reviews from 11,500 users on 140 cameras. Since the camera information crawled from Walmart lacked well-defined item attribute values for all the cameras, we look up Google Products\footnote{www.google.com/about/products} and parse a total of 120 attributes such as {\sf \small self-timer}, {\sf \small red-eye fix}, {\sf \small auto focus}, {\sf \small built-in flash}, etc. We process the reviews to identify a set of positive and negative tags such as {\tt stunning photo quality}, {\tt great pocket camera}, {\tt short battery life}, {\tt expensive}, etc. using the keyword extraction toolkit AlchemyAPI\footnote{www.alchemyapi.com} which, in turn, uses natural language processing technology and machine learning algorithms to extract semantic meta-data from content. We employ RIPPER~\cite{Cohen:1995} to predict the set of rules that shows the dependency between item attributes and tags.

\vspace{0.02in}
\noindent
{\em Real Car Dataset:} 
We crawl a real dataset of 100 used cars listed at Yahoo! Autos\footnote{autos.yahoo.com} for the year 2010. The products contain technical specifications as well as ratings and reviews, which include pros and cons. We parse a total of 47 attributes: 15 numeric, and 32 boolean and categorical (the latter is generalized to boolean). The total number of reviews, i.e., pros and cons by users for the 100 cars is 2350. Since a feedback is labelled `pro' or `con', we do not need to employ any external text mining toolkit for getting the sentiments. The feedbacks are short phrases and keywords. These phrases are processed by domain experts to identify 20 representative positive and 20 representative negative tags that cover all the keywords crawled. For example, the `pro' keywords {\em driver seat comfort}, {\em cockpit comfort including ability to reach all controls easily}, {\em comfort is truly exceptional}, {\em super comfy and roomy for 4 people and dog} correspond to the representative positive tag {\tt comfortable}. 

\vspace{0.01in}
\noindent
\textbf{Performance Measures}: 
Our quantitative performance indicators are (i) {\em efficiency} of the algorithms, (ii) {\em approximation factor} of results produced by the approximation algorithms, and (iii) {\em quality} of the results produced. The efficiency of our algorithms is measured by the overall execution time, whereas approximation factor is determined by the ratio of the approximate result score to the actual optimal result score. The quality of result is measured by the ratio of features covered by our algorithms to the total number of features. We show that our algorithms are scalable and achieve much better response time than the exact algorithm while maintaining similar result quality.
In order to demonstrate that the top-$k$ tags returned by our approaches are useful to the end users, we conduct a user study through Amazon Mechanical Turk as well as write interesting case study.

\subsection{Experimental Results}
\begin{figure}
\centering
\includegraphics[height = 40mm,width = 50mm]{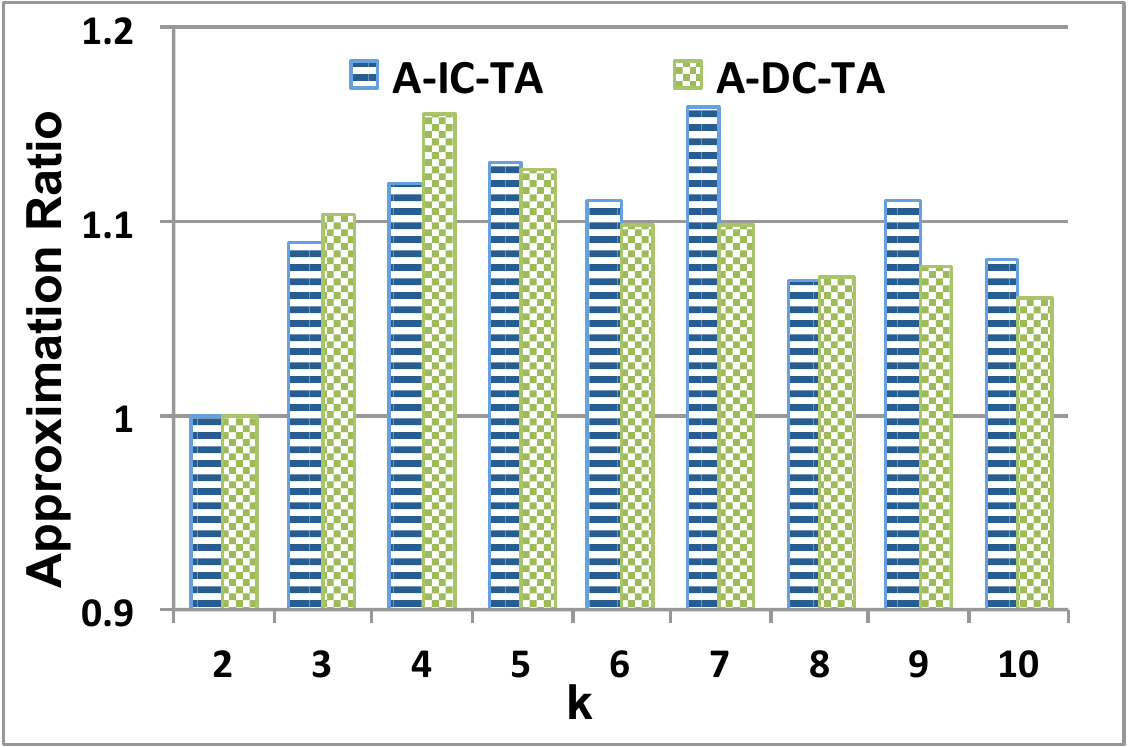}
\caption{{Approximation ratio of A-DC-TA and A-DC-TA $k$, $\alpha=0.5$, $\beta=0.5$}}
\label{fig:approxFactor}
\end{figure}

\subsubsection{Quantitative Evaluation}
We first compare the execution time of our approximation algorithms against the brute-force counterparts. We used an of-the-shelf ILP solver using GLPK in CVXOPT in Python\footnote{http://cvxopt.org/} to solve the ILP version of the problems.
Figure~\ref{allExeTime} shows that the execution time of the proposed algorithms A-IC-TA and A-DC-TA are several orders of magnitude faster than the corresponding exact algorithms E-IC-TA, ILP-IC-TA, E-DC-TA, and ILP-IC-TA for $k$=10, user factor $\alpha$=0.5, and relevance parameter $\beta$=0.5 on entire synthetic data.
Figures~\ref{fig:ICExeTime} and~\ref{fig:DCExeTime} compare execution time of A-IC-TA with E-IC-TA, ILP-IC-TA and that of A-DC-TA with E-DC-TA, ILP-DC-TA respectively by varying parameter $k$, with $\alpha$=$0.5$, and $\beta$=$0.5$. An interesting observation is that the cost of ILP does not always increase with $k$, possibly because ILP solver is based on branch and bound paradigm and the pruning of the search space is more efficient for some instances than for others. Moreover, we observe that by increasing $k$, execution time of the exact algorithms grow exponentially, while A-IC-TA and A-DC-TA scales well.

Next, we investigate the ratio of the approximate result score to the actual optimal result score. In A-IC-TA and A-DC-TA the approximation ratio is the value of the $\textsc {cov}_{IC}(T^*)$ and $\vartheta_{DC}(T^*)$ in Equation~\ref{eq:DC}  to the optimal solutions. We proved in theorems~\ref{th:IC-approxRatio} and~\ref{th:core_pol_and_approx}, A-IC-TA, and  A-DC-TA produce solutions with $2$-approximation of the optimal. Figure~\ref{fig:approxFactor} shows that by varying $k$ ,the approximation ratios are less than $2$. 

Finally, we evaluate the quality of results returned by our approximation algorithms by measuring the proportion of tags covered by the result set of $k$ tags in $T^*$. We compare the proposed algorithms A-IC-TA and A-DC-TA with the exact algorithms ILP-IC-TA and ILP-DC-TA by using the Independent-Coverage function, $\textsc {cov}_{IC}(T^*)$, in Equation~\ref{eq:UC} and Dependent-Coverage function, $\textsc {cov}_{DC}(T^*)$, in Equation~\ref{eq:DC} respectively. We conduct our experiments with different set of constraint conditions, i.e., user factor ($\alpha$), relevance parameter ($\beta$), and $k$. First, we set $\alpha=0.5$, $\beta=1.0$, and vary $k$ from $2$ to $10$ in Figures~\ref{fig:IC-k} and~\ref{fig:DC-k}. The results show that by increasing number of tags $k$, the proportion of covered item attribute values  are increased. Moreover, the quality of our A-IC-TA and A-DC-TA algorithms are almost same as exact algorithms ILP-IC-TA and ILP-DC-TA. Second, we set $k=10$, $\alpha=0.5$, and relevance parameter $\beta$ varies from $0.1$ to $0.9$ in step of $0.2$. Th results in Figure~\ref{fig:IC-rel} and~\ref{fig:DC-rel} show that although the relevance is increasing, proposed A-IC-TA and A-DC-TA algorithms are able to find $10$ tags with as high quality as the exact algorithms. Third, we set $k=10$, $\beta=0.5$, and user factor $\alpha$ varies from $0.1$ to $0.9$ in step of $0.2$. Results are shown in Figure~\ref{fig:IC-user} and~\ref{fig:DC-user}. As one can see from the figure, by increasing the user factor parameter the proportion of covered item attribute values is decreasing. In other words, there are some item attribute values that will be covered by negative tags and since the user factor is high, the lower negative tags are appeared which lead to lower quality. However, the results show that the quality of our algorithm is still as good as the exact algorithms. In summery, all the results from different set of constraint conditions confirm the fact that despite the significant reduction in execution time, our A-IC-TA and A-DC-TA algorithms do not compromise much in terms of analysis quality.

\begin{figure*}[ht]
\begin{minipage}[t]{0.32\linewidth}
\includegraphics[height = 40mm,width = 45mm]{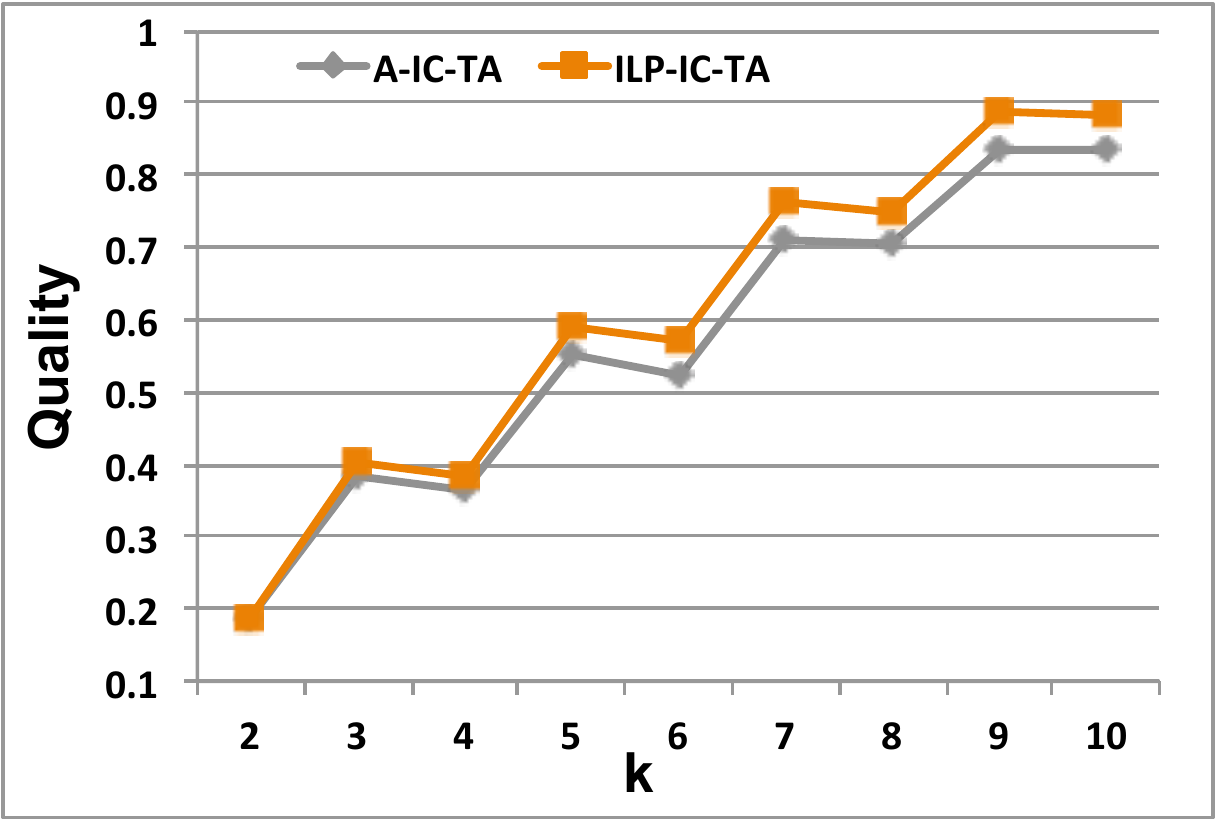}
\caption{{Quality of IC-TA algorithms by varying $k$, $\alpha=0.5$, $\beta=0.5$}}
\label{fig:IC-k}
\end{minipage}
\hspace{0.5mm}
\begin{minipage}[t]{0.32\linewidth}
\includegraphics[height = 40mm,width = 45mm]{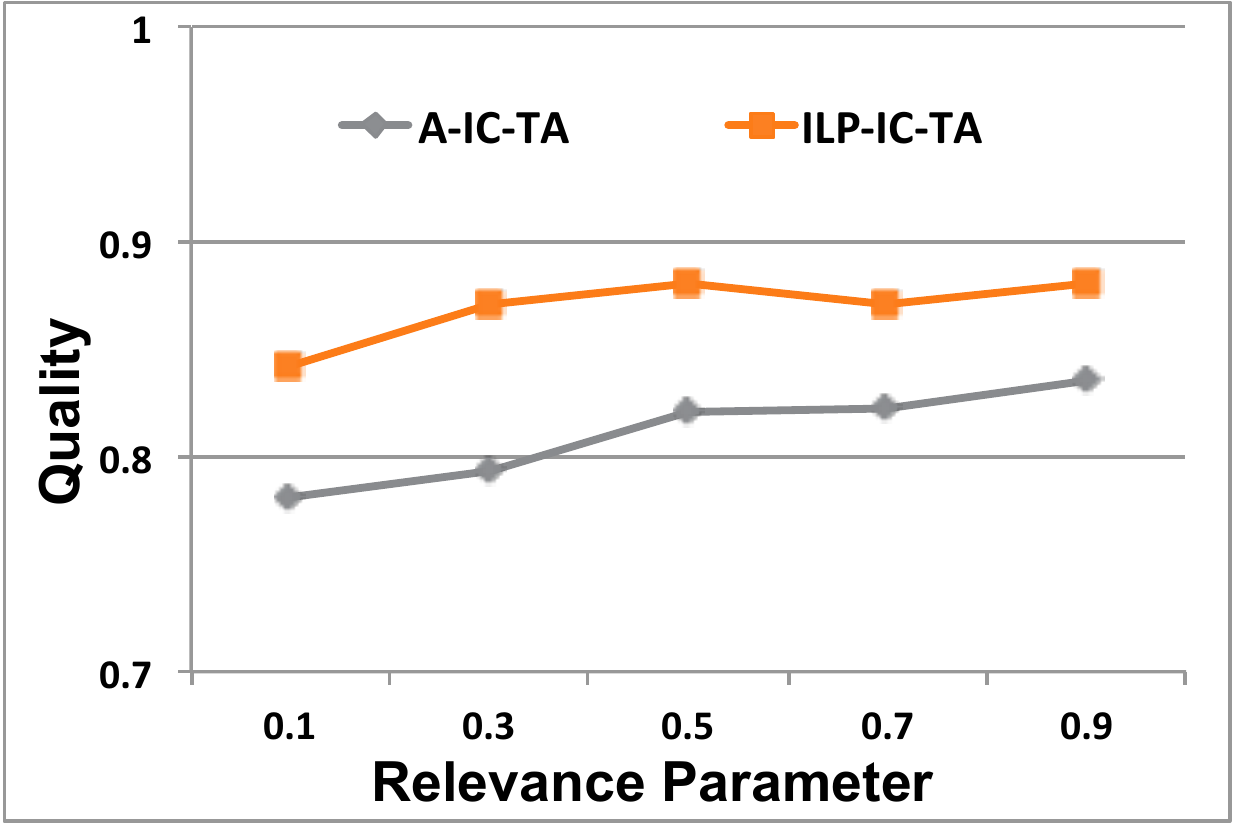}
\caption{{Quality of IC-TA algorithms by varying relevance parameter ($\beta$), $k=10$, $\alpha=0.5$}}
\label{fig:IC-rel}
\end{minipage}
\hspace{1mm}
\begin{minipage}[t]{0.32\linewidth}
\includegraphics[height = 40mm,width = 45mm]{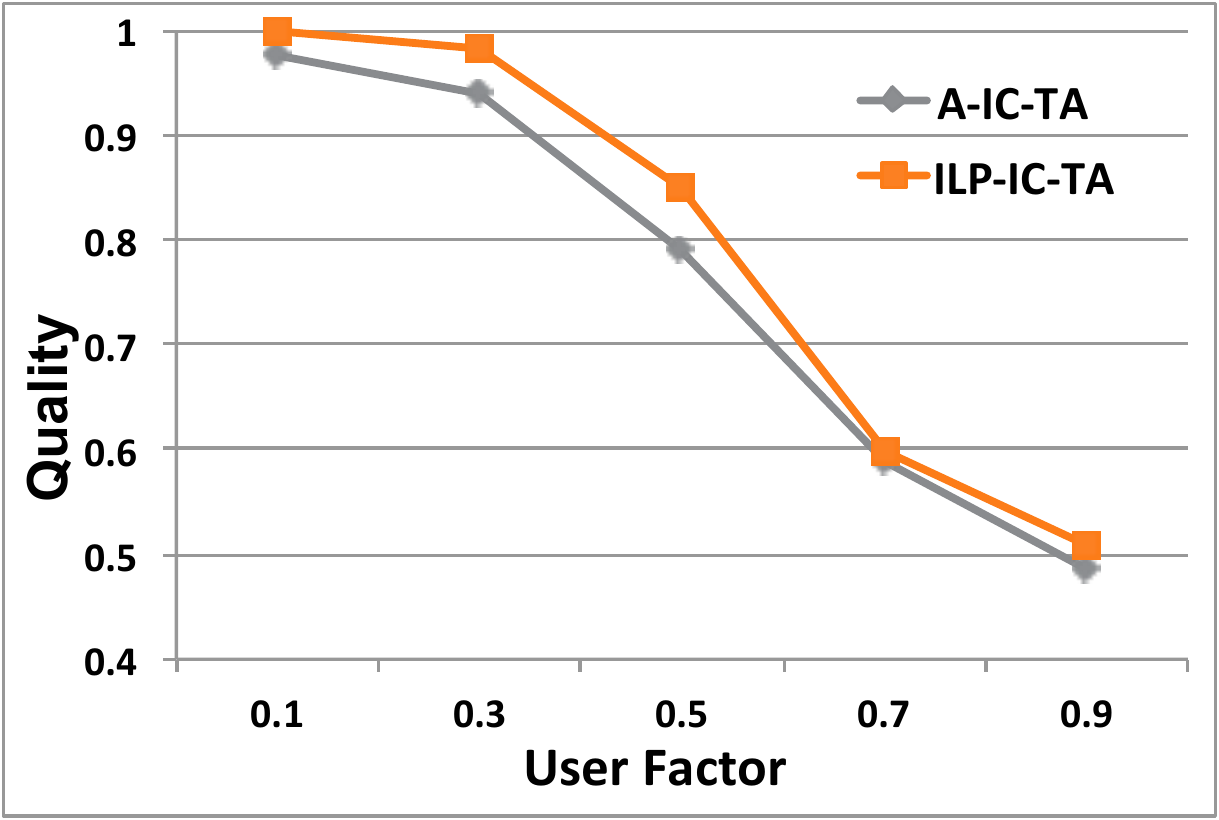}
\caption{\small{Quality of IC-TA algorithms by varying user factor ($\alpha$), $k=10$, $\beta=0.5$}}
\label{fig:IC-user}
\end{minipage}
\end{figure*}

\begin{figure*}[ht]
\begin{minipage}[t]{0.32\linewidth}
\includegraphics[height = 40mm,width = 45mm]{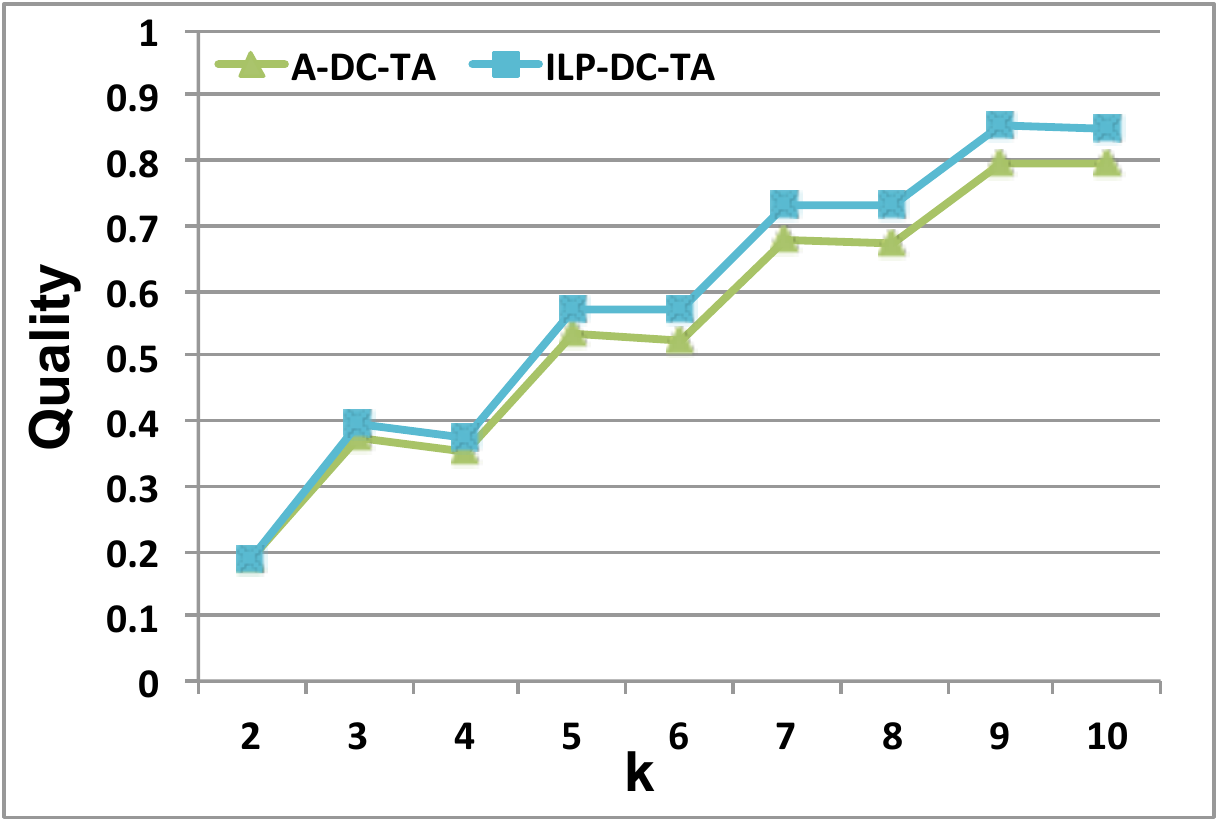}
\caption{{Quality of DC-TA algorithms by varying $k$, $\alpha=0.5$, $\beta=0.5$}}
\label{fig:DC-k}
\end{minipage}
\hspace{0.5mm}
\begin{minipage}[t]{0.32\linewidth}
\includegraphics[height = 40mm,width = 45mm]{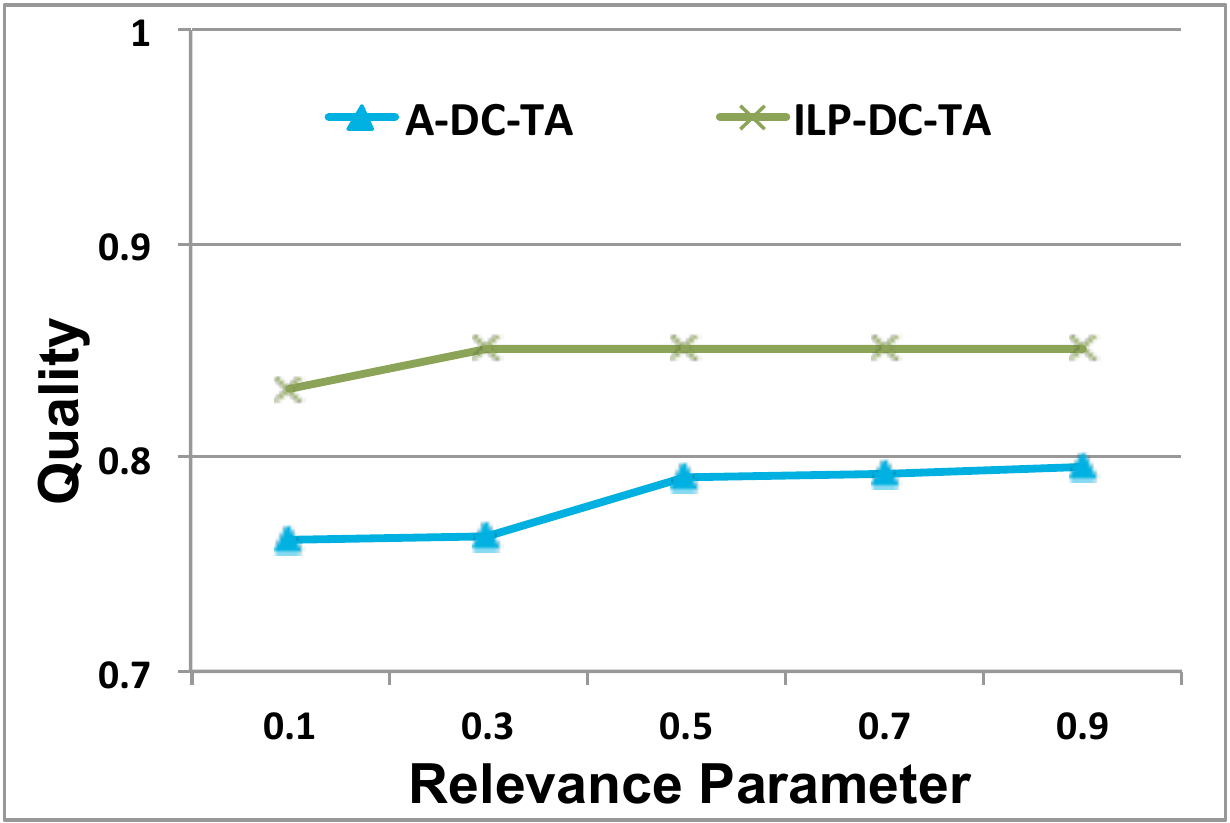}
\caption{{Quality of DC-TA algorithms by varying relevance parameter ($\beta$), $k=10$, $\alpha=0.5$}}
\label{fig:DC-rel}
\end{minipage}
\hspace{1mm}
\begin{minipage}[t]{0.32\linewidth}
\includegraphics[height = 40mm,width = 45mm]{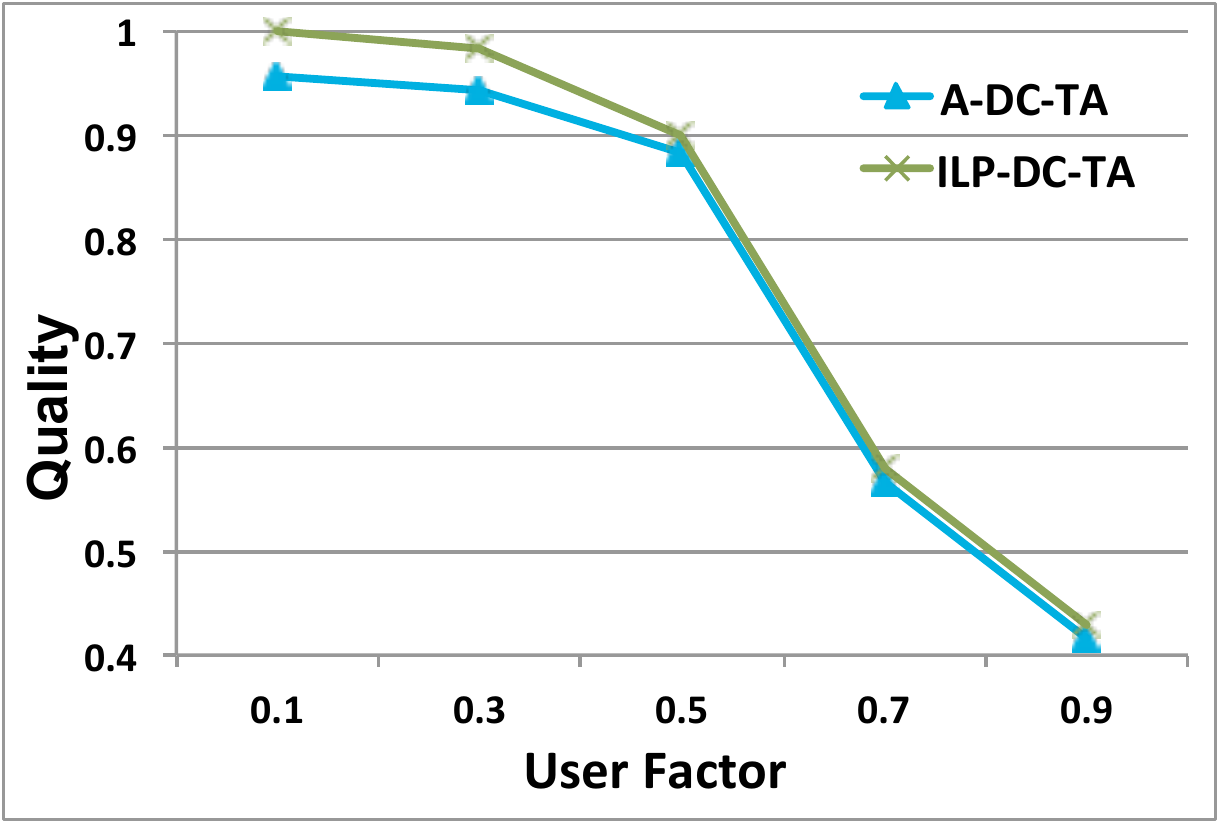}
\caption{\small{Quality of DC-TA algorithms by varying user factor ($\alpha$), $k=10$, $\beta=0.5$}}
\label{fig:DC-user}
\end{minipage}
\end{figure*}




\subsubsection{Qualitative Evaluation}
We now validate how users prefer tags returned by TagAdvisor over writing reviews from scratch in a user study conducted on Amazon Mechanical Turk\footnote{www.mturk.com} on the real camera dataset. We also present an interesting anecdotal result returned by our algorithm for an entry in the real car dataset.

\vspace{0.10in}
\noindent
{\bf User Study:}
We conduct a user study through Amazon Mechanical Turk (AMT to investigate if users prefer and benefit from our TagAdvisor system. We generate the top-$k$ tags for six cameras spanning different bands (Nikon, Canon, and Sony), and different types (digital SLR and compact point-and-shoot). The key objectives are: (i) to elicit the users' responses to the tags returned by our system \textemdash if they find the tags meaningful and adequate to review the product or if they prefer articulating their own review; (ii) to elicit the users' response to the products \textemdash if the feedback left by the users match the tags returned by our system.

We have 30 independent single-user tasks for each of the objectives. Each task is conducted in two phases: User Knowledge Phase and User Judgment Phase. During the first phase, we estimate the user's familiarity about camera and digital photography in general, and the six cameras that are being reviewed. During the second phase, we collect responses to our questions in the study from the users who are estimated to have a reasonable background in the first phase. For the study involving the second objective, we consult domain experts to validate if the tags submitted by the users for the cameras are similar to the tags returned by our system. 
Here are our observations.
\begin{itemize}
\item As many as 80\% users confirmed that they have ever reviewed a product (or service) online, which is a high but understandable percentage since they are AMT workers -- 75\% of these users admitted that they do not write online reviews frequently.
\item 67\% of the users voted that they are knowledgeable about the six cameras (or, other similar cameras) that they have been asked to review in this study.
\item An overwhelming 83\% of the users voted that they would submit online reviews more often if they are provided a set of meaningful keywords to choose from to express their feedback -- 80\% of these users clarified that their `Yes to TagAdvisor' response is also dependent on what tags are provided to them for this purpose.
\item 71\% of the users reviewed the six cameras choosing tags returned by TagAdvisor instead of writing the review from scratch.
\item Finally, 77\% of the users submitted feedback that matches tags returned by TagAdvisor -- 43\% of those users submitted tags that are similar to the ones returned by the Independent Coverage problem while the rest 57\% wrote tags that are similar to the ones returned by the Dependent Coverage problem, thereby endorsing that both Independent Coverage and Dependent Coverage problem are equally important.
\item An interesting observation is that over 81\%  of users,  who submitted their own tags wrote primarily about the more external aspects of the camera such as price, weight, physical look, lens, zoom, etc. instead of providing detailed comments about the quality of image, video capability, ease of use, etc. This is understandable since they are AMT workers and may not have used the exact same camera(s) in their recent past to provide in-depth feedback.
\end{itemize}

This validates the utility and usefulness of our system.

\vspace{0.10in}
\noindent
{\bf Case Study:}
We use the real car dataset to validate that our algorithms return meaningful tags - which meet user's criticalness in reviewing, have sentiment attached to them, and also cover different aspects of the item - as opposed to the tags returned by existing tag recommendation systems~\cite{Belem:2013,DBLP:conf/kdd/FengW12,DBLP:conf/sigir/SongZLZLLG08}. Since ~\cite{Belem:2013} is the only tag recommender engine that returns tags that are relevant and diverse, we compare our result against it.

Suppose a user wants to submit her feedback for a 2010 {\sf \small Audi Q5}\footnote{Note that, our results are not influenced by, or biased towards, any brand in particular.} by choosing from a set of tags advised to her. If $k = 6$, the tags suggested by the tag recommender in~\cite{Belem:2013} are:

\vspace{0.05in}
\noindent 
{\tt amazing power}, {\tt comfortable}, {\tt convertible top with sunroof}, {\tt nice style}, {\tt good gas mileage}, {\tt great auto transmission}
\vspace{0.05in}

Although this approach returns tags that cover diverse aspect of car, i.e, {\sf \small Standard Engine}, {\sf \small Seats}, {\sf \small Sunroof}, {\sf \small Fuel Capacity}, and {\sf \small Standard Transmission}, it does not consider sentiment. All the 6 tags are positive.

Considering user factor parameter $\alpha = 0.5$, relevance parameter $\beta = 0.5$, our IC-TA algorithm returns the tags:

\vspace{0.05in}
\noindent  
{\tt great auto transmission}, {\tt good gas mileage}, {\tt nice style}, {\tt odd engine sound}, {\tt wind noise at high sp\-eeds}, {\tt uncomfortable rear seat} 
\vspace{0.05in}	

These tags not only covers same aspects of the car as above, i.e, {\sf \small Standard Transmission}, {\sf \small Fuel Capacity}, {\sf \small Standard Engine}, {\sf \small Sunroof}, and {\sf \small Seats}, but it also satisfies the user's criticalness in reviewing ($\alpha = 0.5$), by returning three positive and three negative tags - the first three in the set above being positive and the last three being negative.

Under the same parameter specifications as above, our DC-TA algorithm returns the tags:

\vspace{0.05in}
\noindent 
{\tt amazing power}, {\tt convertible top with sunroof}, {\tt co\-mfortable}, {\tt odd engine sound}, {\tt wind noise at high speeds}, {\tt uncomfortable rear seat}
\vspace{0.05in}

These tags not only cover different aspects of the car such as {\sf \small Standard Engine}, {\sf \small Sunroof}, and {\sf \small Seats} but also allows the user to provide both positive and negative feedback for the same feature. Specifically, {\tt amazing power}, {\tt odd engine sound} are positive and negative tags respectively for the car feature {\sf \small Standard Engine}. Two t	ags {\tt convertible top with sunroof}, {\tt wind noise at high speeds} are positive and negative tags for the car feature {\sf \small Sunroof}. The last pair of tags {\tt comfortable}, and {\tt uncomfortable rear seat} are positive and negative tags for the car feature {\sf \small Seats}. Thus, the user has the option to select positive and/or negative feedback about this feature when she submits her feedback.

\vspace{-0.1in}
\section{Related Work}
\label{sec:relwork}
\vspace{0.05in}
\noindent\textbf{Tag Recommendation}: Tag recommendation has been extensively studied in literature~\cite{Belem:2013,DBLP:conf/kdd/FengW12,DBLP:conf/sigir/SongZLZLLG08,Hu:2010,wang:2013}.
The authors in~\cite{Hu:2010} focused on user perspective and they proposed a probabilistic framework for solving the personalized tag recommendation, but without considering diversity.
Result diversification has been studied in tag recommendation domain by~\cite{wang:2013,Belem:2013}; however, they take into account the possible topics and their goal is to provide high coverage and low redundancy with respect to those topics. The authors in~\cite{Belem:2013} used the general probabilistic framework in~\cite{Agrawal:2009} to address relevance and coverage. However, they assumes topics are independent, upon which a tag can not be dependent to the combination of the topics. The authors in~\cite{DBLP:conf/sigir/SongZLZLLG08} deals with the automated process to suggest useful and informative tags based on historical information. In our problem, the tags are more feedback than information  about the resource and hence calls for additional properties like coverage of all item attributes as well as sentiment polarity in opinion of the user for the different attributes of the item. A recent work~\cite{DBLP:conf/kdd/FengW12} proposes an optimization-based graph method for personalized tag recommendation. Though it considers both user features and item features for tag recommendation, the ranking-based solution recommends popular tags related to one or few specific aspects of the product and may evoke the rich-get-richer phenomenon, which in-turn is orthogonal to our objective of coverage. For example, if the popular tags for a point and shoot digital camera are {\tt lightweight}, {\tt thin}, and {\tt portable}, the method would return them as the top tags even though they are all related to the \textsf{weight} of the product. We intend to return tags covering the different aspects of the product such as \textsf{weight}, \textsf{price}, etc. as well as the different sentiments in opinion such as {\tt light} \textsf{weight}, {\tt heavy} \textsf{weight}, {\tt low} \textsf{price}, {\tt high} \textsf{price}, etc. so that the user can submit her review objectively. The authors in~\cite{Hu:2010} focused on user perspective and they proposed a probabilistic framework for solving the personalized tag recommendation, but without considering diversity.

\vspace{0.05in}
\noindent\textbf{Review Mining}: There has been a considerable amount of work in review summarization, ranking and selection~\cite{DBLP:conf/ACMicec/GhoseI07,DBLP:conf/kdd/HuL04,DBLP:conf/kdd/LappasCT12,DBLP:conf/kdd/TsaparasNT11}; yet, none of them can be readily extended to handle our problem. 
Review summarization creates statistical descriptions (i.e., a short snippet of text by extracting few existing sentences) of the review corpus in order to extract the proportion of positive and negative opinions about different aspects of a product. However, none of the current work directly caters to our objective of identifying {\em personalized} (i.e., user and item specific) tags. We leverage item descriptions, user demographics, as well as user sentiment. 
Review ranking aims to produce a score for each review and then display the top-$k$ highest-scoring reviews to the user~\cite{DBLP:conf/ACMicec/GhoseI07}. More specifically,~\cite{DBLP:conf/ACMicec/GhoseI07} proposed two ranking mechanisms for ranking product reviews: consumer-oriented ranking mechanism ranks the reviews according to their expected helpfulness, and a manufacturer oriented ranking mechanism ranks the reviews according to their expected effect on sales. However, they do not seek coverage over the range of features that are important to users and hence may return redundant information. For example, the top reviews for a point and shoot digital camera may just mention how {\tt ultrathin} and {\tt portable} it is, and not mention anything about how it has {\tt poor battery life}. 
Review summarization identifies a subset of helpful reviews that collectively provide both the negative and the positive aspects of each commented feature~\cite{DBLP:conf/kdd/TsaparasNT11}. While these methods do manage to expand the coverage of features and hence, diversify, they fail to capture the statistical properties of the actual review corpus. For example, if majority of the reviews for a SLR digital camera mention how {\tt excellent video quality} it produces, that should be given higher weight than returning one positive and one negative opinion about the camera feature {\sf video quality}. While~\cite{DBLP:conf/kdd/LappasCT12} returns a characteristic set of reviews that respects the proportion of opinions on each feature (both positive and
negative), as observed in the underlying corpus, neither does it leverage user preferences, nor does it leverage user feedback for other {\em similar items} - both of which are {\em necessary} considerations of the set of tags returned by our problem.   


\vspace{0.05in}
\noindent\textbf{Rule Learning}: In this paper, we used existing techniques to find the rules of the complex dependencies among item attributes and the tags. Rule learning has been extensively studied and there are different techniques such as: rule based classifiers techniques like RIPPER~\cite{Cohen:1995,Liu:1998,Quinlan:1993}, learning-based techniques like Re-RX~\cite{Diederich:20088} \cite{Setiono:2008}. In rule base classifiers, rules can be extracted directly from data~\cite{Liu:1998,Cohen:1995} or it can be extracted from other classification models~\cite{Quinlan:1993}. In~\cite{Liu:1998}, association rule mining is used to extract the rules while in~\cite{Cohen:1995} rules are extracted sequentially and for one class at a time. The authors in~\cite{Quinlan:1987} describe a technique for transforming decision trees to succinct collection of  if-then rules. Authors in~\cite{Chiang:2001} studied how to reduce the number of final rules in decision tree; ~\cite{Sirikulviriya:2011} proposed a new method that can integrate rules from multiple trees in a random forest to improve the comprehensiveness of the extracted rules. There has been many prior work on extracting classification rules from Support Vector Machines (SVM)~\cite{Nunez:2002},~\cite{Costa:2005},~\cite{Barakat:2010}, and~\cite{Diederich:20088}. In~\cite{Nunez:2002} rules are extracted from ellipsoids and hyper-rectangles formed using clustering algorithms. The fuzzy rule extraction method~\cite{Costa:2005} utilizes trained SVs to generate rule from each SV for each class.


\vspace{-0.1in}
\section{Conclusion}
\label{sec:conc}
In this paper, we introduce the novel TagAdvisor problem that leverages available user feedback for items in online review sites to simplify the review writing task. Our framework returns top-$k$ tags relevant to the product a user is reviewing, have sentiment attached to them, and cover the diverse attributes of the product. To the best of our knowledge, our framework is the first to consider all three measures simultaneously in the context of tag mining. Our work is also the first to address the popular problem in the web - how to motivate users to review a product online - in a principled way. We formulate the problem as a general-constrained optimization goal. By adopting different definitions of coverage, we identify two concrete problem instances that enable a wide range of real-world scenarios. We show that these problems are NP-hard and develop practical algorithms with theoretical bounds to solve them efficiently. Our experiments validate the utility of our problem and demonstrate that our proposed solutions generate equally good quality results as exact brute-force algorithms with much less execution time. 


\bibliographystyle{ACM-Reference-Format-Journals}
\bibliography{tagreco}

\end{document}